\title{Recoverable and Detectable \\ Self-Implementations of Swap}
\titlerunning{Recoverable and Detectable Self-Implementations of Swap}
\author{Tomer Lev Lehman}{Department of Computer Science, Ben Gurion University, Beer Sheva, Israel}{levletom@post.bgu.ac.il}{}{}
\author{Hagit Attiya}{Department of Computer Science, Technion, Haifa, Israel}{hagit@cs.technion.ac.il}{https://orcid.org/0000-0002-8017-6457}{}
\author{Danny Hendler}{Department of Computer Science, Ben Gurion University, Beer Sheva, Israel}{hendlerd@cs.bgu.ac.il}{https://orcid.org/0000-0001-7152-7828}{}
\authorrunning{T. Lev Lehman, H. Attiya, and D. Hendler}
\keywords{Multi-core algorithms, persistent memory, non-volatile memory, recoverable objects, detectablitly}
\keywords{Persistent memory, non-volatile memory, recoverable objects, detectablitly}
\begin{document}

\maketitle

\begin{abstract}
\emph{Recoverable} algorithms tolerate failures and recoveries of processes 
by using non-volatile memory.
Of particular interest are \emph{self-implementations} of key operations, 
in which a recoverable operation is implemented from its non-recoverable 
counterpart (in addition to reads and writes). 

This paper presents two self-implementations of the SWAP operation.
One works in the \emph{system-wide failures} model, 
where all processes fail and recover together, 
and the other in the \emph{independent failures} model, where each process 
crashes and recovers independently of the other processes.

Both algorithms are wait-free in crash-free executions, 
but their recovery code is blocking.
We prove that this is inherent for the independent failures model.
The impossibility result is proved for implementations of \emph{distinguishable}
operations using \emph{interfering} functions, and in particular, 
it applies to a recoverable self-implementation of swap.
\end{abstract}

\section{Introduction}

Recent years have seen a rising interest in the failure-recovery model for concurrent computing.
This model captures an unstable system, where processes may crash and recover. 
Two variants of the model have been considered. 
In the \emph{system-wide failure} model (also called the \emph{global-crash} model), all processes fail simultaneously and a single process is responsible for the recovery of the whole system.
In the \emph{independent failures} model (also called the \emph{individual-crash} model), each process can incur a crash independently of other processes and recovers independently.
\emph{Recoverable} algorithms, tolerating failures and recoveries, 
have been presented for various concurrent data structures,
for both the system-wide model~\cite{nahum2022recoverable, cho2022practical, friedman2018persistent, li2021detectable, FatourouKK2022performance, hoseinzadeh2021corundum, rusanovsky2021flat, correia2018romulus}
and the independent-failure model~\cite{nahum2022recoverable, cho2022practical, attiya2018nesting, li2021detectable,ben2019delay}.

The correctness of a recoverable algorithm can be specified in several ways. 
\emph{Durable Linearizability}~\cite{izraelevitz2016linearizability} 
intuitively requires linearizability~\cite{herlihy1990linearizability} 
of all operations that survive the crashes. 
\emph{Detectability}~\cite{friedman2018persistent} ensures that upon recovery, 
it is possible to infer whether the failed operation took effect or not and, 
in the former case, obtain its response.
\emph{Nesting-safe Recoverable Linearizability} (NRL)~\cite{attiya2018nesting}, defined for the independent failures model, ensures detectability and linearizability. 
It also allows the nesting of recoverable objects. 
By providing implementations of NRL primitive objects, a programmer can combine several of these primitives to create recoverable implementations of complex higher-level objects and algorithms. This level of abstraction can be helpful in the adoption of recoverable algorithms.

To facilitate high-level implementations of complex NRL objects 
it is helpful to introduce implementations of low-level base NRL objects. 
An attractive approach to designing low-level base NRL objects is through 
\emph{self-implementations}~\cite{nahum2022recoverable}, 
in which a recoverable operation is implemented with instances 
of the same primitive operation, 
possibly with additional reads and writes on shared variables.  
This approach ensures that when using the recoverable version of an operation, 
the system must only support its hardware-implemented primitive counterpart. 

NRL self-implementations already exist for various primitives,
including \emph{read}, \emph{write}, \emph{test\&set}, 
and \emph{compare\&swap}~\cite{attiya2018nesting, ben2019delay},
as well as \emph{fetch\&add}~\cite{nahum2022recoverable}. 
A universal construction using NRL read, write and \emph{compare\&swap} 
objects~\cite{ben2019delay} builds upon previously-introduced self-implementations 
of NRL objects to take any concurrent program with read write and CAS, 
and make it recoverable while adding only constant computational overhead. 

This paper presents the first NRL self-implementations of \emph{swap}, 
for both the system-wide and the independent failures models. 
Swap is a widely-used primitive that is employed by concurrent algorithms. 
Our implementations borrow ideas from the \emph{recoverable mutual exclusion (RME)}~\cite{golab2016recoverable} algorithms of \cite{golab2017recoverable, jayanti2019recoverable}, which use a similar approach to overcome swap failures.
Unlike these algorithms, however, 
our implementations are also challenged with the task of 
satisfying wait-freedom and linearizability.
Both our algorithms are wait-free in crash-free executions,
while the recovery code in both is blocking.

We also present an impossibility proof for implementing 
a class of \emph{distinguishable operations} 
using a set of \emph{interfering functions}~\cite{herlihy1991wait} 
in a recoverable \emph{lock-free} fashion in the independent failures model. 
In particular, this result applies to self-implementations of swap, but it also holds for,
e.g., implementing swap using fetch-and-add and swap combined. Other distinguishable operations to which this proof applies are the \emph{deque} of a queue object and the \emph{pop} of a stack object. 
Our impossibility result unifies and extends specialized 
results for self-implementations of \emph{test\&set}~\cite{attiya2018nesting} 
and \emph{fetch\&add}~\cite{nahum2022recoverable}. 
Another related impossibility result addresses 
recoverable consensus in the independent failures model~\cite{golab2020recoverable}.

Several previous papers introduce general mechanisms to port existing algorithms and 
make them persistent, e.g., 
by using transactional memory~\cite{chakrabarti2014atlas, correia2018romulus, izraelevitz2016failure, ramalhete2019onefile}, universal constructions~\cite{berryhill2016robust, cohen2018inherent, correia2020persistent}, 
or for specific families of algorithms \cite{ben2019delay,david2018log, friedman2020nvtraverse}.
Most of these transformations use strong primitives such as \emph{compare\&swap} 
while their non-recoverable counterparts may use weaker primitives, 
in terms of their consensus number~\cite{herlihy1991wait}.
We believe future research may use our self-implementation of swap to 
extend general constructions such as \cite{ben2019delay} 
mentioned above to programs that also use swap as a primitive.

Other papers present hand-crafted persistent implementations of specific data structures, 
e.g., \cite{friedman2018persistent, nawab2017dali, schwalb2015nvc, zuriel2019efficient}. 
In contrast to these implementations, our algorithms provide a recoverable counterpart to an atomic primitive operation, which we believe can later be used in various other implementations of recoverable algorithms. 
Similarly to NRL, \emph{detectable sequence specifications} (DSS), introduced by Li and Golab \cite{li2021detectable}, formalizes the notion of detectability. The DSS-based approach is more portable and less reliant on system assumptions in comparison to NRL, but delegates the responsibility for nesting to application code.

Our algorithm for the independent failures model uses an RME lock such as the one presented by Golab and Ramaraju~\cite{golab2016recoverable}, which uses only reads and writes. 
Additionally, a long line of papers on RME exists, solving several other aspects such as abortability, FCFS, and more~\cite{jayanti2017recoverable, KatzanM2021recoverable, jayanti2022recoverable, chan2020recoverable, jayanti2023constant, jayanti2019optimal}.

To summarize, our contributions are the following: 
\begin{itemize}
  \item A recoverable detectable self-implementation of swap in the system-wide failures model.
  \item A recoverable detectable self-implementation of swap in the independent failures model.
  \item An impossibility proof for implementations of distinguishable operations using interfering functions in the independent failures model.
\end{itemize}


\section{Related Work}
\subsection{Correctness conditions}
\label{sec:related_work:correctness_conditions}

Various correctness conditions and definitions exist for recoverable algorithms utilizing persistent memory. 
\emph{Strict linearizability}~\cite{aguilera2003strict} requires operations interrupted by a failure to take effect either before the failure or not at all. A relaxed condition called \emph{persistent atomicity}~\cite{guerraoui2004robust} in the context of message passing systems, allows an operation to take effect even after a failure, before the next operation invocation of the same process. \emph{Recoverable linearizability}~\cite{berryhill2016robust} builds on the definition of persistent atomicity and restores locality – the desirable property that an execution involving multiple objects is correct if and only if its projection onto each individual object is correct~\cite{golab2018recoverable}.

\emph{Durable linearizability}~\cite{izraelevitz2016failure}, defined for the system-wide failures model, assumes that, upon a crash, all processes fail and do not recover, and that new processes are spawned instead. Durable linearizability requires linearizablity of all operations surviving crashes, essentially requiring linearizability of the history when pending operations and their respective crash events are removed from the history.

\emph{Detectability}~\cite{friedman2018persistent} requires  that an object provides a mechanism that can tell, for every failed operation, whether or not it was completed, and if so obtain its response. Detectability can be added as an extra requirement for various correctness properties for implementations to satisfy

\emph{Detectable sequential specifications} (DSS)~\cite{li2021detectable} formalizes the notion of detectability by adding three new operations for every operation \emph{op}, \emph{prep-op}, \emph{exec-op}, and \emph{resolve}. Calling \emph{prep-op} notifies the system that it should remember the outcome of the upcoming operations, \emph{exec-op} executes the operation and \emph{resolve} can then be called after a crash to return the outcome of the most recently prepared operation (i.e., one for which \emph{prep-op} was called). DSS can be paired with any of the previously stated correctness conditions to obtain a detectable version of those properties. The DSS-based approach is more portable and less reliant on system assumptions in comparison to NRL, but delegates the responsibility for nesting to application code. In addition, the developer must implement 2 extra operations (\emph{prep-op, exec-op}) for each DSS-compliant operation.

\subsection{Recoverable Mutual Exclusion}
\label{sec:related_work:RME}

The \emph{Recoverable Mutual Exclusion(RME)} problem, formulated by \cite{golab2016recoverable}, extends the classic mutual exclusion problem for systems in which processes may crash and then recover. In addition to standard mutual exclusion properties such as deadlock-freedom and starvation-freedom, RME may also require the \emph{Critical Section Re-entry (CSR)} 
property, ensuring that if a process \emph{p} crashes within the critical section, then no other process can enter the critical section before  \emph{p} recovers and re-enters it~\cite{ben2022survey}. 

Several RME implementations were presented in recent years. Golab and Ramaraju presented an n-process RME implementation using only reads and writes~\cite{golab2016recoverable}. It is modeled after the non-recoverable ME algorithm of Yang and Anderson~\cite{yang1995fast}. The algorithm is based on a binary tournament tree of height $O(\log n)$ where each node is a two-process mutex (also introduced in~\cite{golab2016recoverable}).

Other RME implementations include a first-come-first-served (FCFS) RME algorithm with $O(\log n)$ Remote Memory Reference (RMR) complexity~\cite{jayanti2017recoverable}, 
an abortable sub-logarithmic RMR complexity RME algorithm~\cite{katzan2020recoverable}, 
and an abortable FCFS RME algorithm~\cite{jayanti2022recoverable}.

Golab and Hendler introduce an RME algorithm for the cache-coherent(CC) model~\cite{golab2017recoverable}, using $\text{fetch}\&\text{store}$ and $\text{compare}\&\text{swap}$, which incurs $O({\dfrac{\log n}{\log \log n}})$ RMRs. Their design is inspired by Mellor-Crummey and Scott’s queue lock (MCS lock)~\cite{mellor1991algorithms}. The MCS lock maintains a queue based structure that determines the order of entry into the CS. Each node in the queue also holds a boolean that is used to transfer ownership of the CS from a predecessor to a successor. Jayanti et al. \cite{jayanti2019recoverable} presented an RME algorithm with the same asymptotic RMR complexity for both the distributed shared memory (DSM) and the CC models.

\subsection{Recoverable implementations}
\label{sec:related_work:recoverable_implementations}

Various hand-crafted recoverable implementations exist for specific data structures. In \cite{friedman2018persistent}, the authors propose three different implementations of a concurrent lock-free queue, each satisfying different correctness properties, including durable linearizability and detectability. All three build on Michael and Scott’s queue and consist of a linked list of nodes that hold the enqueued values, as well as head and the tail references. In addition, there exist hand-crafted recoverable implementations of concurrent hash-maps~\cite{nawab2017dali, schwalb2015nvc, zuriel2019efficient}.

A different approach is to introduce general mechanisms to port existing algorithms and 
make them persistent. 
Mechanisms based on transactional memory~\cite{chakrabarti2014atlas, correia2018romulus, izraelevitz2016failure, ramalhete2019onefile} generally utilize various log-based methods to ensure durability of existing algorithms. 
Berryhill et al.~\cite{berryhill2016robust}
take Herlihy’s universal construction~\cite{herlihy1991wait} 
and transform it to satisfy recoverable linearizability. 
ONLL~\cite{cohen2018inherent} takes any deterministic object $O$ and produces a 
lock-free durably-linearizable implementation of $O$ that requires at most 
one persistent fence per update operation and no persistent fence for read-only operations. 

Another approach consists of mechanisms designed for specific families of algorithms. Ben-David et al.~\cite{ben2019delay} present a method that allows taking any concurrent program with reads, writes and CASs to shared memory and make it recoverable by utilizing code capsules. 
David et al.~\cite{david2018log} focus on link based data structures and provide generic techniques that enable designing what they call log-free concurrent data structures. 
Friedman et al.~\cite{friedman2020nvtraverse} present a general transformation that takes 
a lock-free data structure from a general class of node-based tree data structures 
(\emph{traversal data structures}) and automatically 
transforms it into a durably-linearizable implementation of the data structure.

\section{Model and Definitions}

We use a simplified version of the NRL system model~\cite{attiya2018nesting}. 
There are $n$ asynchronous \emph{processes} $p_1, \ldots, p_{n}$,
which communicate by applying atomic \emph{primitive} read, write and read-modify-write operations 
to \emph{base objects}. The state of each process consists of non-volatile \emph{shared variables}, 
which serve as base objects, as well as volatile \emph{local variables}. 

We first describe the \emph{independent failures} model, in which each process can incur a \emph{crash-failure} 
(or simply a \emph{crash}) at any point during the execution independently of other processes.
A crash resets all of its local variables to arbitrary values 
but preserves the values of all non-volatile variables. 

A process $p$ \emph{invokes an operation} $Op$ on an object by performing an \emph{invocation step}. 
\emph{Op completes} by executing a \emph{response step}, in which \emph{the response of OP is stored to a local volatile variable of $p$}. It follows that the response value is lost if $p$ incurs a crash before \emph{persisting} it, that is, before writing it to a non-volatile variable.
Operation $Op$ is \emph{pending} if it was invoked but was not yet completed;
a process has at most one pending operation. 

A \emph{recoverable operation Op} is associated with a \emph{recovery procedure} $Op.\text{RECOVER}$ that is responsible for completing $Op$ upon recovery from a crash. 
If the object only supports a single recoverable operation, then its recovery procedure is simply named RECOVER.
Following a crash of process $p$ that occurs when $p$ has a pending recoverable operation instance, the system  
eventually resurrects process $p$ by invoking the recovery procedure of the recoverable operation that was pending when $p$ failed. This is represented by a \emph{recovery step for $p$}. 

Formally, a \emph{history} $H$ is a sequence of \emph{steps}. There are four types of steps:
\begin{enumerate}
	\item An \emph{invocation step}, denoted $(INV, p, O, Op)$, represents the invocation by process $p$ of operation $Op$ on object $O$.
	\item A \emph{response step} $s$, denoted $(RES, p, O, Op, ret)$, represents the completion by process $p$ of operation $Op$ invoked on object $O$ by some (invocation) step $s'$ of $p$, with response $ret$ being written to a local variable of $p$;  \emph{s is the response step that matches s'}.
    An operation $Op$ can be completed either normally or when, following one or more crashes, the execution of $Op.\text{RECOVER}$ is completed. 
    \item A \emph{crash step} $s$, denoted $(CRASH, p)$, represents the crash of process $p$. We call the recoverable operation $Op$ of $p$ that was pending when the crash occurred the \emph{crashed operation of s}. $(CRASH, p)$ may also occur when the recovery procedure $Op.\text{RECOVER}$ is executed for recovering an operation of $p$ and we say that $Op$ is the crashed operation of $s$ also in this case.
	\item A \emph{recovery step $s$ for process $p$}, denoted $(REC, p)$, is the only step by $p$ that is allowed to follow a $(CRASH, p)$ step $s'$. It represents the resurrection of $p$ by the system, in which it invokes $Op.\text{RECOVER}$, where $Op$ is the crashed operation of $s'$. 
    We say that \emph{s is the recovery step that matches} $s'$.
\end{enumerate}

When a recovery procedure $Op.\text{RECOVER}$ is invoked to recover from a crash represented by step $s$, we assume it receives the same arguments as those with which $Op$ was invoked when that crash occurred.

As proven by \cite{ben2020upper}, detectable algorithms for the NRL model must keep an auxiliary state that is provided from outside the operation, either via operation arguments or via a non-volatile variable accessible by it. We assume that $Op.\text{RECOVER}$ has access to a designated per-process non-volatile variable ${SEQ}_p$, storing \emph{the sequence number of Op}.
Before $p$ invokes an operation on the object, it increments ${SEQ}_p$. 

An object is a \emph{recoverable object} if all its operations are recoverable. 
Below, we consider only histories that arise from operations on recoverable objects or atomic primitive operations.



Fix a history $H$. 
$H$ is \emph{crash-free} if it contains no crash steps (hence also no recovery steps). 
$H | p$ denotes the sub-history of $H$ consisting of all the steps by process $p$ in $H$. 
$H | O$ denotes the sub-history of $H$ consisting of all the invoke and response steps on object $O$ in $H$, 
as well as any crash step in $H$, by any process $p$, 
whose crashed operation is an operation on $O$ and the corresponding recovery step by $p$ (if it appears in $H$). 
$H|{<}p,O{>}$ denotes the sub-history consisting of all the steps on $O$ by $p$. 

A crash-free sub-history $H | O$ is \emph{well-formed}, 
if for all processes $p$, $H|{<}p,O{>}$ is a sequence of alternating, 
matching invocation and response steps, starting with an invocation step.
A crash-free history $H$ is \emph{well-formed} if: 
(1) $H | O$ is well-formed for all objects $O$, and 
(2) each invocation event in $H | p$, except possibly the last one, is immediately followed by its matching response step.

$H | O$ is a \emph{sequential object history} if it is 
an alternating series of invocations and the matching responses starting with an invocation;
it may end with a pending invocation. 
The \emph{sequential specification} of an object $O$ is the set of all \emph{legal} 
sequential histories over $O$. 
$H$ is a \emph{sequential history} if $H | O$ is a sequential object history for all objects $O$.

Two histories $H$ and $H'$ are \emph{equivalent}, 
if $H|{<}p,O{>}=H'|{<}p,O{>}$ for all processes $p$ and objects $O$. 
Given a history $H$, a \emph{completion} of $H$ is a history $H'$ 
constructed from $H$ by selecting separately, 
for each object $O$ that appears in $H$,
a subset of the operations pending on $O$ in $H$ and appending matching responses 
to all these operations, and then removing all remaining pending operations on $O$ (if any).

An operation $op_1$ \emph{happens before} an operation $op_2$ in $H$, 
denoted $op_1 <_H op_2$, 
if $op_1$'s response step precedes the invocation step of $op_2$ in $H$. 

\begin{definition} [Linearizability \cite{herlihy1990linearizability}, rephrased]
	\label{Definition: Linearizability}
A finite crash-free history $H$ is \emph{linearizable} if it has a completion
$H'$ and a legal sequential history $S$ such that 
  $H'$ is equivalent to $S$ and
$<_H \subseteq <_S$ (i.e., if $op_1 <_H op_2$ and both $op_1$ and $op_2$ appear in $S$, then $op_1 <_S op_2$).
\end{definition}

To define nesting-safe recoverable linearizability, 
we introduce a more general notion of well-formedness that applies also to 
histories that contain crash/recovery steps. 
For a history $H$, we let $N(H)$ denote the history obtained from $H$ by removing all crash and recovery steps.
%
A history $H$ is \emph{recoverable well-formed} if 
every crash step in $H | p$ is either $p$'s last step in $H$ or is followed in $H | p$ by a matching recovery step of $p$,
and 
$N(H)$ 
is well-formed.


\begin{definition} [Nesting-safe Recoverable Linearizability (NRL)]
\label{Definition:NRL}
A finite history $H$ satisfies \emph{nesting-safe recoverable linearizability} (NRL) if it is recoverable well-formed and $N(H)$ is a linearizable history.
An object implementation satisfies NRL if all of its finite histories satisfy NRL.
\end{definition}

We build upon the above definitions also for the \emph{system-wide failures} model, but we require that if a crash step occurs, then it occurs simultaneously for all processes whose operations crash. Formally, let $p_{i_1},\ldots, p_{i_k}$, for $k \leq n$, be the set of processes that have a pending invocation of operation $Op$ when a system-wide crash occurs. We represent the crash by appending the sequence $(CRASH, p_{i_1}), \ldots, (CRASH, p_{i_k})$ to the execution. During recovery, the system executes a parameterless \emph{global recovery procedure for Op} called $Op.\text{GRECOVER}$. We represent the recovery by appending the sequence $(REC, p_{i_1}), \ldots, (REC, p_{i_k})$ to the execution. Once $Op.\text{GRECOVER}$ completes, the system resurrects 
each of the processes $p_{i_1},\ldots, p_{i_k}$ for performing an \emph{individual recovery procedure for Op}, called $Op.\text{RECOVER}$. New operations on the object can be invoked only after recovery ends. If Op is the single object operation, we use the names GRECOVER and RECOVER instead of $Op.\text{GRECOVER}$ and $Op.\text{RECOVER}$, respectively.

An algorithm is \emph{lock-free} if, whenever a set of processes take a sufficient number of steps and none of them crashes, then it is guaranteed that one of them will complete its operation.
An algorithm is \emph{wait-free}, if any process that does not incur a crash during its execution completes it in a finite number of its steps. A swap object supports the \emph{SWAP(val)} operation, which atomically swaps the object's current value $cur$ to $val$ and returns $cur$.

\section{Detectable Swap Algorithm for the System-Wide Failures Model}
\label{sec:global-swap}

A key challenge to overcome when implementing a detectable swap object from read, write, and primitive swap operations is that the return values of one or more primitive swap operations may be lost upon a system-wide failure that occurs before the operations are persisted. These non-persisted operations may have already affected the state of the swap object and, moreover, operations by other processes may have already returned the values written by these primitive operations. To ensure linearizability, the implementation must identify such operations and handle them correctly.

The return value of each SWAP operation must meet a few requirements. 
First, it should be the input of another SWAP operation (or the initial value of the swap object). 
Second, the operand swapped in by one SWAP operation can be returned by at most a single other SWAP operation. 
Finally, in order to maintain linearizability, if $op_1 <_H op_2$ holds, 
then $op_1$ cannot return the value that was swapped in by $op_2$.

Figure~\ref{fig_swap_linearize} illustrates a scenario involving $6$ processes, denoted $p_1,\ldots p_6$, which perform $8$ SWAP operations, denoted $op_0, \ldots op_7$. A system-wide crash occurs when operations $op_0,op_2,op_4,op_6$ have already been completed (hence their return values are specified) while operations $op_1,op_4,op_5,op_7$ are pending. Note that operations $op_1,op_4,op_5$, although not completed, have surely affected the global state of the swap object as their inputs are the return values of other operations, while $op_7$ (pending as well) might or might not have affected the object's state. 

There are several ways the system may recover in order to produce a correct linearizable result. In all of them, $op_1$ must return $0$. The remaining operations might return different values in the following ways: (1) $op_4$ returns $2$, $op_5$ returns $3$, and $op_7$ returns $6$. (2) $op_7$ returns $2$, $op_4$ returning $7$, and $op_5$ returns $3$. (3) $op_4$ returns $2$, $op_7$ returns $3$, and $op_5$ returns $7$.
There are several possible linearizations in this example, because $op_7$ may be linearized in several ways since its effect on the global state is unknown. Note that for correctly recovering $op_1$, the operand of the very first operation, $op_0$, must be recorded. This is why our algorithms retain a record of all invoked operations.
\begin{figure}[bp]
\centering
\includegraphics[scale=0.525]{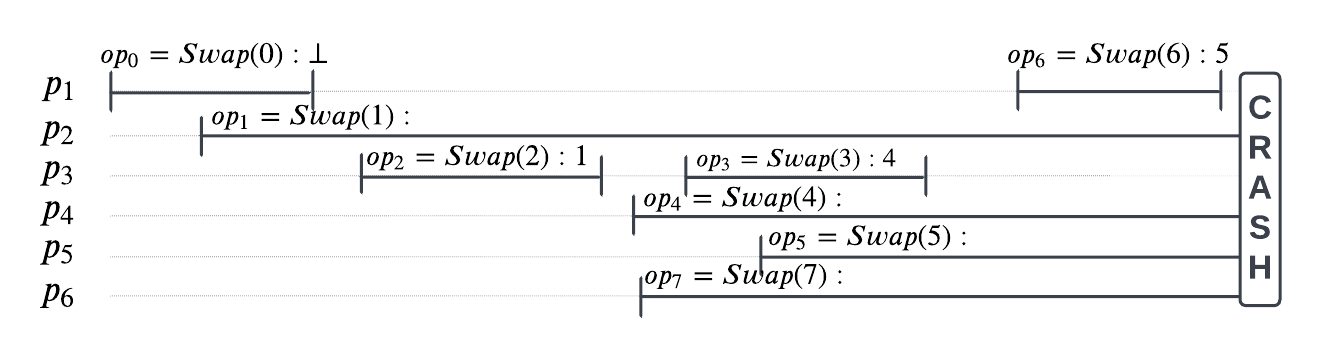}
\caption{An example of the effect of a system-wide failure.} 
\label{fig_swap_linearize}
\end{figure}

We represent the order of SWAP operations as a linked list of Node structures, 
the end of which is pointed by a $tail$ variable manipulated with primitive swaps. 
The list starts with a sentinel node called \emph{headNode}, 
which holds the object's initial value (denoted $\bot$).

Each Node structure represents a single SWAP operation and stores a pointer \emph{prev} to the node of its predecessor operation and the SWAP's operand \emph{val}. The order of SWAP operations is reflected by the order of the Node structures in the list. By doing so, each Node points to the previous Node structure that represents the previous SWAP operation, hence the operation's return value will be \emph{Node.prev.val}.

A problem can occur if a process successfully swaps its Node into the list but crashes before pointing from its structure to the previous Node. This type of failure may create what is referred to as \emph{fragments} in the list representing the SWAP operations. Thus, instead of a single complete list, crashes may result in several incomplete disconnected lists. In order to reconnect these fragments back to a complete list, our algorithms go over all previously-announced operations upon recovery and recreate a correctly ordered complete list of operations. 

A similar idea was used by the recoverable mutual exclusion (RME) algorithms of Golab and Hendler~\cite{golab2017recoverable} and Jayanti et al.~\cite{jayanti2019recoverable}, which also have to reconnect the fragments of an MCS lock \cite{mellor1991algorithms} linked-list based queue, caused by failures that occur just before or after primitive swap operations.

Our algorithms need to address two challenges that do not exist in the setting of \cite{golab2017recoverable,jayanti2019recoverable}, however. First, 
the SWAP operations of our algorithms are required to be wait-free whereas RME implementations are allowed to block. Second, unlike RME implementations, our algorithms are required to maintain linearizability. Specifically, the new order of list fragments, constructed during recovery, must respect the real-time order between SWAP operations. 

We address these challenges by employing a \emph{fragment ordering} scheme,
which we view as the key algorithmic novelty of our algorithms.
The scheme encapsulates the critical steps of each SWAP operation by two vector timestamp computations. 
Based on the resulting timestamps, the recovery code ensures the following invariant: 
if a fragment $A$ contains a Node $n_A$ that was created after an operation associated 
with a Node $n_B$ on fragment $B$ was completed, then fragment $B$ will be ordered after fragment $A$ in the connected list.
We formally describe this fragment ordering scheme in Definition~\ref{def_path_ordering}.

Figure~\ref{fig_swap_nodes} presents a set of fragments that may be generated immediately 
after the system-wide crash ending the execution depicted in Figure~\ref{fig_swap_linearize}. We describe it next and introduce a few terms that are used later in the sequel. The fragment of $op_7$ contains a single Node; we name such 1-size fragments \emph{singleNodes}. The node of $op_0$ belongs to the single fragment that contains \emph{headNode}; we name this fragment the \emph{head fragment}. The nodes of $op_5$ and $op_6$ belong to the single fragment that contains the node pointed to by \emph{tail}; we name this fragment the \emph{tail fragment}.
Fragments such as $(op_2, op_1)$ and $(op_3, op_4)$ that are neither singleNodes, nor head nor tail fragments are called \emph{middle fragments}. 

When ordering middle fragments and singleNodes, the algorithm uses vector timestamps for maintaining linearizability. As an example, consider a linked list, reconnecting the fragments of Figure~\ref{fig_swap_nodes}, in which $op_4.prev \leftarrow op_0$, $op_1.prev \leftarrow op_3$, $op_7.prev \leftarrow op_2$ and $op_5.prev \leftarrow op_7$. Although this list contains the Nodes of all operations from tail to head, it violates linearizability because $op_3$ is ordered after $op_2$ although it follows it in real-time order.
By using the two vector timestamps, our algorithm is able to order the fragments so that linearizability is maintained.

Another case that may arise is a set of fragments that consists of a single complete list with one or more singleNodes, which results from a crash that occurs when none of the pending operations completed their primitive swap operations. As we prove, in this case it is safe to put all these singleNodes (in any order) at the end of the list.

\begin{figure}[tp]
\centering
\includegraphics[scale=0.25]{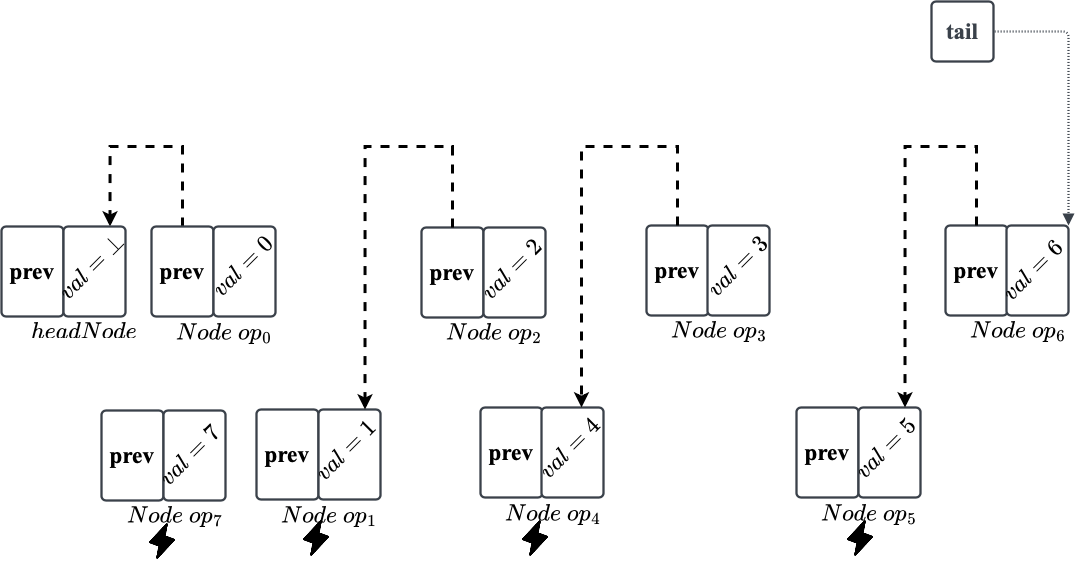}
\caption{List fragments existing after the execution described in Figure~\ref{fig_swap_linearize}, immediately after a system-wide crash. Operations 1,4,5 crashed after swapping their Nodes to \emph{tail} but before persisting their pointer to their predecessor Node.}
\label{fig_swap_nodes}
\end{figure}

\begin{algorithm}[ht]
\small
\caption{Recoverable detectable SWAP for system-wide failures model, code for process $i$.}\label{alg_glb_swap}
\begin{algorithmic}[1]
\Statex // Text in \textcolor{blue}{blue} is for the independent failures algorithm only.
\Statex \textbf{Define} Node: struct \{\emph{val} : \textbf{Value}, \emph{prev} : \textbf{ref} to Node, \emph{prevExecution} : \textbf{ref} to Node, \hspace*{3.45cm} \emph{startVts} : \textbf{vector}, \emph{endVts} : \textbf{vector}, \emph{seq} : \textbf{int}, \textcolor{blue}{\emph{inWork} : \textbf{int}}\}
\Statex \textbf{Initial State:}
\Statex $Nodes[i] \gets null$ for $i \in \{0,\ldots,n\}$
\Statex $VTS[i] \gets 0$ for $i \in \{1,\ldots,n\}$
\Statex $headNode \leftarrow new(Node)$
\Statex $headNode.val \leftarrow \bot$, $headNode.seq \leftarrow 0$, $headNode.prev \leftarrow null$, $headNode.prevExecution \leftarrow null$, \textcolor{blue}{$headNode.inWork \leftarrow 0$}.
\Statex $headNode.startVts \leftarrow collect(VTS)$, $headNode.endVts \leftarrow collect(VTS)$
\Statex $tail \leftarrow headNode$
\Statex $Nodes[0] \leftarrow headNode$
\Procedure {SWAP}{$val$} \Comment{executed by process $i$}
    \State $myNode \leftarrow new(Node)$ \label{SWAP:createNewNode}
    \State \emph{myNode.prev}$\leftarrow$null, \emph{myNode.seq}$\leftarrow SEQ_i$, \emph{myNode.prevExecution}$\leftarrow$ null, \emph{myNode.val}$\leftarrow$val \label{SWAP:InitNode}
    \State $VTS[i] \leftarrow VTS[i] + 1$ \label{SWAP:incVTSEntry}
    \State $myNode.startVts \leftarrow collect(VTS)$ \label{SWAP:collectTostartVTS}
    \State $prevExecution \leftarrow Nodes[i]$ \label{SWAP:readPrevExecution}
    \State $myNode.prevExecution \leftarrow prevExecution$ \label{SWAP:linkToPrevExecution}
    \State \textcolor{blue}{$myNode.inWork \leftarrow 1$} \Comment{begin swap}\label{ind_swap_inwork_1}
    \State $Nodes[i] \leftarrow myNode$ \Comment{announce the operation} \label{glb_declaration_point} \label{ind_swap_declared}
    \State $prev \leftarrow primitiveSwap(\&tail, myNode)$ \label{glb_lineraization_point} \label{ind_swap_primitive_swap}
    \State $myNode.prev \leftarrow prev$  \Comment{persisting operation}\label{glb_swap_persist_point} \label{ind_swap_prev_assign}
    \State $myNode.endVts \leftarrow collect(VTS)$ \label{ind_swap_collect_endVts}
    \State \textcolor{blue}{$myNode.inWork \leftarrow 0$} \Comment{finished operation}\label{ind_swap_inwork_0}
    \State \Return  $myNode.prev.val$ \label{glb_swap_return}
\EndProcedure
\algstore{globswap}
\end{algorithmic}
\end{algorithm}

\begin{algorithm}[ht!]
\small
\begin{algorithmic}[1]
\algrestore{globswap}
\Procedure {GRECOVER}{$ $} \label{RECVER:global-recover-start} \Comment{Global SWAP recovery procedure}
\State $V \leftarrow \emptyset$, $E \leftarrow \emptyset$ \label{RECOVER:startGraphConstruction}
\For{$i$ from $0$ to $n$} \label{glb_recover_for_loop_start}
    \State $currNode \leftarrow Nodes[i]$
    \While{$currNode \ne null$}
        \State $V \leftarrow V \cup \{currNode\}$
        \If {$currNode.prev \ne null$}
            \State $V \leftarrow V \cup \{currNode.prev\}$
            \State $E \leftarrow E \cup \{(currNode, currNode.prev)\}$
        \EndIf
        \State $currNode \leftarrow currNode.prevExecution$
    \EndWhile
\EndFor \label{RECOVER:endOfTraversalLoop}
\State $TAILNODE \leftarrow new(Node)$ \label{RECOVER:tailNodeCreate}
\State \Comment{Add graph node representing the list's tail and graph edge pointing to the \emph{tail} Node}
\State $V \leftarrow V \cup \{TAILNODE\}$ 
\State $E \leftarrow E \cup \{(TAILNODE, tail)\}$ \label{RECOVER:endGraphConstruction} \label{RECOVER:addTailNodeEdge}
\State Compute set $Paths$ of maximal paths in graph ${\cal{G}}=(V,E)$ \label{RECOVER:computePaths}
\label{glb_recover_compute_maximal_paths}
\State $MiddlePaths \leftarrow \emptyset $
\State $SingleNodes \leftarrow \emptyset$ \label{RECOVER:singleNodesInit}
\For {$path \in Paths$}
\If{$len(path) == 1$}
    \State $SingleNodes \leftarrow SingleNodes \cup \{start(path)\} $
    \State Remove $path$ from $Paths$ \label{RECOVER:singleNodesRemove}
\EndIf
\EndFor
\For{$path \in Paths$} \label{RECOVER:classifyFragmentsStart}
    \If{$TAILNODE \in path$ and $Nodes[0] \in path$}\label{glb_recover_if_single_full_path} 
    \Statex \Comment{There is a single full path from tail to head }
    \State For every Node in $SingleNodes$ re-execute SWAP from Line~\ref{glb_lineraization_point}\label{glb_recover_reexecute_SWAP_for_single_node}
    \State \Return \label{RECOVER:single_full_path_return}
    \ElsIf{$TAILNODE \in path$}
        \State $TailPath \leftarrow path$
    \ElsIf{$Nodes[0] \in path$}
        \State $HeadPath \leftarrow path$
    \Else
        \State $MiddlePaths \leftarrow MiddlePaths \cup \{path\}  $
    \EndIf
\EndFor \label{RECOVER:classifyFragmentsEnd}
\For{$curPath \in sort(MiddlePaths \cup SingleNodes$ in non-increasing $\succ$ order$)$}\label{glb_swap_path_sort} 
    \State $end(TailPath).prev \leftarrow start(curPath)$
    \label{glb_recover_prev_assign}
    \State update $TailPath$ to include added path \label{glb_swap_path_sort_end} 
\EndFor
\State $end(TailPath).prev \leftarrow start(HeadPath)$ \label{glb_swap_conclude_mending}
\label{glb_recover_tail_prev_assign}\label{glb_recover_final_line}
\EndProcedure \label{RECVER:global-recover-end}
\Procedure {RECOVER}{$val $} \Comment{Individual SWAP recovery procedure for process i} \label{glb_swap_indv_recover_start}
\State $myNode \leftarrow Nodes[i]$
    \If{$myNode == null$ or $myNode.seq < SEQ_i$}
        \State \Return SWAP($val$)
        \label{glb_recover_rerun_swap_fully}
    \Else
        \State \Return $myNode.prev.val$
        \label{glb_recover_return_ind_recover}
    \EndIf
\EndProcedure \label{glb_swap_indv_recover_end}
\end{algorithmic}
\end{algorithm}

\subsection{Detailed Description of the Algorithm}

Data structure definitions and the pseudo-code are presented by Algorithm~\ref{alg_glb_swap}. Text in \textcolor{blue}{blue} is for the independent failures algorithm and should be disregarded for now. We first describe key data structures and shared variables.

Each SWAP operation is represented by a single \emph{Node} structure. \emph{Node.val} stores the operand of the SWAP operation. \emph{Node.seq} stores the sequence number of the current process's SWAP operation. \emph{Node.prev} is a pointer to the \emph{Node} structure representing the previous SWAP operation. Consequently, \emph{Node.prev.val} stores the value that must be returned by the SWAP operation represented by \emph{Node}. Each \emph{Node} structure also stores two timestamp vectors of size $n$---\emph{Node.startVts} and \emph{Node.endVts}. Lastly, \emph{Node.prevExecution} is a pointer to the \emph{Node} structure representing the previous SWAP operation by the same process (if there is one).

\emph{Nodes} is an array of $n+1$ pointers to \emph{Node} structures.
\emph{Nodes}[0] points to the \emph{headNode} sentinel node. For each process $i \in \{1,\ldots,n\}$, \emph{Nodes}[$i$] points to the beginning of a list of \emph{Node} structures, induced by \emph{prevExecution} pointers, that represent the SWAP operations of process $i$. This array is used for recording all \emph{Node} structures created throughout the execution. 

\emph{VTS} is an array of length $n$ that serves as a global vector timestamp. Entry $i$ counts the number of operations performed by process $i$. \emph{tail} is a pointer to a \emph{Node} structure. The algorithm maintains a linked list of \emph{Node} structures representing the order of SWAP operations and tail points to the last \emph{Node} structure in the list.

The following order relation between paths is used by the global recovery procedure.

\begin{definition} 
\label{def_path_ordering}
Given two paths $A$ and $B$, we denote $A \succ B$ if there are nodes 
$n_A \in A$ and $n_B\in B$ such that $n_A.startVts > n_B.endVts$. If neither $A \succ B$ nor $B \succ A$ holds, we say that $A$ and $B$ are $\succ$-equal.
\end{definition}

A SWAP operation first creates a \emph{Node} structure and initializes it (Lines \ref{SWAP:createNewNode}-\ref{SWAP:InitNode}). It then increments its entry of the VTS, collects VTS, and writes the resulting vector timestamp to the \emph{startVTS} field of its node (Lines \ref{SWAP:incVTSEntry}-\ref{SWAP:collectTostartVTS}). In Lines \ref{SWAP:readPrevExecution}-\ref{SWAP:linkToPrevExecution}, the node representing the current operation is linked to the list of the previous operations executed by this process. Then, the process \emph{announces the operation} by writing a pointer to its node to its entry of the \emph{Nodes} array (Line~\ref{glb_declaration_point}). 
Next, the procedure invokes an atomic $primitiveSwap$ operation to read a node pointer from $tail$ and swap it with a pointer to the node representing the current operation (Line~\ref{glb_lineraization_point}). Then, the previous $tail$ value is persisted to the $prev$ field of the operation's \emph{Node} structure (Line \ref{glb_swap_persist_point}), thus adding this operation to the fragment of its predecessor operation. 
If a process executes Line~\ref{glb_lineraization_point} but the system crashes before it executes Line~\ref{glb_swap_persist_point}, a new fragment will result that cannot be reached from \emph{tail} using \emph{prev} pointers. (These fragments are reconnected by the global recovery procedure GRECOVER.) The operation terminates by performing a second collect of VTS, writing it to the \emph{endVts} field, and returning the previous value stored at $prev.val$ (Lines \ref{ind_swap_collect_endVts},\ref{glb_swap_return}).

The GRECOVER procedure (Lines \ref{RECVER:global-recover-start}-\ref{RECVER:global-recover-end}) of the SWAP operation is performed by the system upon recovery. As we've mentioned before, its task is to reconnect fragments caused by a system-wide crash by creating a total order between SWAP operations that maintains linearizability. When it terminates, all the SWAP operations that were previously announced (in Line \ref{glb_declaration_point}) are ordered in a single fragment that includes the \emph{headNode} and the node pointed to by \emph{tail}. As we prove, the order of operations induced by this fragment is a linearization of the execution. 

After the completion of GRECOVER, the system resurrects all the processes whose SWAP operations crashed, for executing the individual recovery procedure (Lines \ref{glb_swap_indv_recover_start}-\ref{glb_swap_indv_recover_end}). It first checks if the sequence number of the process' last announced operation (found in the \emph{Nodes} array) equals $SEQ_i$. If it does, the value of \emph{Nodes}[i]'s \emph{prev} field is returned (Line~\ref{glb_recover_return_ind_recover}). Otherwise, process $i$'s operation crashed before it was announced and so SWAP is re-executed (Line~\ref{glb_recover_rerun_swap_fully}).

Pending SWAP operations that updated their $prev$ pointers before the crash can simply return the value stored in $prev.val$. 
SWAP operations that did not update their $prev$ pointer in Line \ref{ind_swap_prev_assign} before a crash are of two types: Those that executed Line~\ref{glb_lineraization_point} before the crash and those that did not. SWAP operations of the latter type are simpler to deal with since they did not change the $tail$ pointer and can therefore be re-executed. Correctly ordering SWAP operations of the first type (i.e. those that executed Line~\ref{glb_lineraization_point} but did not execute Line~\ref{glb_swap_persist_point}) is more challenging, since their primitive swap operation changed \emph{tail}'s value but its return value was lost. 


We proceed to describe GRECOVER in more detail. It starts by constructing a directed graph $\cal{G}$ whose nodes correspond to Node structures and whose edges correspond to \emph{prev} pointers (Lines \ref{RECOVER:startGraphConstruction}-\ref{RECOVER:endGraphConstruction}). The construction is done by traversing (non-null) \emph{prev} and \emph{prevExecution} fields starting from each entry of the Nodes array (Lines \ref{RECOVER:startGraphConstruction}-\ref{RECOVER:endOfTraversalLoop}). After the traversal ends, a special TAILNODE node and an edge directed from it to the node pointed at by \emph{tail} are added to the graph for simplifying the handling of the tail fragment (Lines \ref{RECOVER:tailNodeCreate}-\ref{RECOVER:addTailNodeEdge}). A set \emph{Paths} of maximal directed paths in $\cal{G}$ is computed in Line \ref{RECOVER:computePaths}. $\cal{G}$ is cycle-free, because each SWAP operation performs Line \ref{glb_lineraization_point} at most once and, if it does, receives in response a pointer to an operation that performed Line \ref{glb_lineraization_point} before it. Thus, the set \emph{Paths} is well-defined. Each element of \emph{Paths} represents a fragment.

Next, all singleNodes (if any) are removed from \emph{Paths} and inserted into a separate \emph{SingleNodes} set (Lines \ref{RECOVER:singleNodesInit}-\ref{RECOVER:singleNodesRemove}). If \emph{Paths} has a fragment that contains both TAILNODE and the \emph{headNode} sentinel node then, as we prove, there are no middle fragments. In this case, each of the operations that correspond to \emph{SingleNodes} is executed, in turn, starting from Line \ref{glb_lineraization_point} and their nodes are thus appended to the end of this full path. Then GRECOVER returns (Lines \ref{glb_recover_if_single_full_path}-\ref{RECOVER:single_full_path_return}). Otherwise, the fragments in \emph{Paths} are categorized to a single \emph{HeadPath} fragment, a single \emph{TailPath} fragment, and a \emph{MiddlePaths} set that contains all other paths, which are middle fragments (Lines \ref{RECOVER:classifyFragmentsStart}-\ref{RECOVER:classifyFragmentsEnd}).

Next, all fragments other than \emph{HeadPath} and \emph{TailPath} are sorted in non-increasing $\succ$ order (see Definition~\ref{def_path_ordering}) and are appended, one after the other, to the end of \emph{TailPath} by updating appropriate \emph{prev} fields (Lines \ref{glb_swap_path_sort}-\ref{glb_swap_path_sort_end}). Sorting is done by comparing \emph{startVts} and \emph{endVts} fields as specified by Definition~\ref{def_path_ordering}. The construction of the full order is concluded by appending the \emph{HeadPath} to the end of the \emph{TailPath} (Line \ref{glb_swap_conclude_mending}).



The execution of GRECOVER, as well as that of the individual recovery procedure, can incur one or more crashes. In this case, the recovery process is re-executed upon recovery from each crash. As we prove, when it completes, operations are ordered correctly. 


\subsection{Proof of Correctness}

We say that the list $l=(a, a_1 = a.prev, a_2 = a_1.prev ..., a_m= a_{m-1}.prev, b=a_m.prev)$, for $m \geq 0$, of length $m+2$, is 
\emph{induced by prev pointers}. We also say that \emph{l starts at a} and \emph{l ends at $b$}.
We define the notion of a list \emph{induced by prevExecution pointers} similarly.


\begin{theorem}\label{thoerm_glb_swap_is_NRL} Algorithm~\ref{alg_glb_swap} implements a recoverable NRL SWAP in the system-wide failures model using only read, write and primitiveSwap operations. Its SWAP operations are wait-free.
\end{theorem}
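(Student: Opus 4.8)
The plan is to establish the two claims separately. Wait-freedom is immediate: the SWAP procedure (Lines~\ref{SWAP:createNewNode}--\ref{glb_swap_return}) contains no loops, and each of its primitive steps---including the atomic $primitiveSwap$ and the two $collect(VTS)$ operations---terminates in a bounded number of steps, so any process that does not crash completes SWAP in finitely many of its own steps, as required by the definition of wait-freedom.

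For NRL, recall from Definition~\ref{Definition:NRL} that I must show every finite history is recoverable well-formed and that $N(H)$ is linearizable. Recoverable well-formedness follows from the structure of the recovery mechanism (each crash in $H|p$ is matched by a recovery step or is terminal), so the heart of the argument is linearizability of $N(H)$. I would first dispatch the crash-free case by designating Line~\ref{glb_lineraization_point} (the $primitiveSwap$ on $tail$) as the linearization point. Since each $primitiveSwap$ atomically installs the operation's node at $tail$ and returns its predecessor, the list induced by $prev$ pointers records exactly the order in which processes executed Line~\ref{glb_lineraization_point}; by induction on this order, each operation's persisted $prev.val$ (Line~\ref{glb_swap_persist_point}) is precisely the value current immediately before its linearization point, yielding a legal sequential history that respects $<_H$.

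The main work is showing that $N(H)$ stays linearizable across a system-wide crash, i.e.\ that GRECOVER splices the fragments into a single $prev$-list whose induced order is a valid linearization. First I would argue that ${\cal{G}}$ is acyclic---each operation executes Line~\ref{glb_lineraization_point} at most once and, when it does, receives a pointer to a strictly earlier such operation---so \emph{Paths} is well defined and partitions the announced nodes into head, tail, middle, and single-node fragments. The crux is a timestamp lemma: if operation $op_A$ completes Line~\ref{glb_lineraization_point} in real time before $op_B$ begins, then since $VTS$ entries are monotone and each operation increments its own entry before collecting $startVts$, we get $op_B.startVts > op_A.endVts$, hence $B \succ A$ by Definition~\ref{def_path_ordering}; conversely, $\succ$-equal fragments are mutually real-time-concurrent. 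Using this lemma I would show that sorting middle fragments and single nodes in non-increasing $\succ$ order (Line~\ref{glb_swap_path_sort}) and splicing them between the tail fragment and the head fragment produces a $prev$-list respecting $<_H$ between every pair of operations---whether in the same or different fragments---so that applying the crash-free argument to the reconstructed list exhibits the required legal sequential history.

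The subtle cases, which I expect to be the main obstacle, concern single nodes whose operation already executed Line~\ref{glb_lineraization_point} before crashing, thereby altering $tail$ and possibly the (lost) return value of a successor. I would show such a node can arise only when its successor also failed to persist its $prev$ pointer, so both remain pending and are free to be linearized; the timestamp lemma then certifies that placing the node at its $\succ$-sorted position is consistent with real time. For the special case where \emph{Paths} already contains a full tail-to-head path (Line~\ref{glb_recover_if_single_full_path}), I would prove that every remaining single node never executed Line~\ref{glb_lineraization_point}---otherwise the path could not reach \emph{headNode}---so it has not affected the object's state and may safely be re-executed and appended at the end. Finally, I would verify individual RECOVER: since GRECOVER assigns a valid $prev$ to every announced node, a process whose node carries sequence number $SEQ_i$ correctly returns $prev.val$ (Line~\ref{glb_recover_return_ind_recover}), while a process whose operation was never announced simply re-runs SWAP; and I would close by observing that a crash during GRECOVER or RECOVER leaves the persisted fragment structure intact, so re-executing the recovery code reconstructs the same linearization.
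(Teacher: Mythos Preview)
Your plan tracks the paper's proof closely: wait-freedom from the loop-free code, Line~\ref{glb_lineraization_point} as the crash-free linearization point, and for crashes, showing that GRECOVER splices the fragments into a single $prev$-list that respects $<_H$. The supporting pieces you name (acyclicity of~${\cal G}$, the single-node analysis in the full-path case, idempotence of recovery) all appear in the paper in essentially the same form.

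There is one genuine gap. Your timestamp lemma establishes that if $op_1$ completes before $op_2$ begins then $B \succ A$ (with $A,B$ the respective fragments), and from this you conclude that the non-increasing $\succ$-sort at Line~\ref{glb_swap_path_sort} respects $<_H$. But that inference needs $\succ$ to be \emph{antisymmetric} on the fragments: you must rule out $A \succ B$ simultaneously, for otherwise a ``non-increasing $\succ$'' sort is not even well-defined, and in particular could place $A$ after $B$. The paper isolates exactly this as Lemma~\ref{paths_order_claim} and proves it using a structural fact you have not stated: because fragments are node-disjoint (Lemma~\ref{glb_lemma_disjoint_paths}) and a new fragment arises only when some process executes Line~\ref{glb_lineraization_point} without reaching Line~\ref{glb_swap_persist_point}, every operation on one fragment must have executed its primitiveSwap strictly before every operation on the other. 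Combining this total temporal separation with the positions of the $startVts$/$endVts$ collects yields the contradiction. Your ``converse'' clause (that $\succ$-equal fragments are real-time concurrent) is just the contrapositive of your forward direction and does not give antisymmetry; you should state and prove the antisymmetry lemma explicitly.

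A small wording fix: in your timestamp lemma you write ``completes Line~\ref{glb_lineraization_point}'', but $endVts$ is only written at Line~\ref{ind_swap_collect_endVts}; the hypothesis you actually need (and the one relevant for $<_H$) is that $op_1$ has \emph{returned} before $op_2$ begins.
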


\begin{proof}
Clearly from the code, SWAP operations are wait-free since they have no loops and so they terminate in crash-free executions.
We also observe that the algorithm is linearizable in crash-free executions: An operation $Op$ is linearized in Line~\ref{glb_lineraization_point} when it swaps a pointer to its Node to \emph{tail}. The next operation to execute Line~\ref{glb_lineraization_point} after $Op$ (if any) is guaranteed to return $Op$'s input as its response in line \ref{glb_swap_return}. If $Op$ is the first SWAP operation to perform Line~\ref{glb_lineraization_point} then, from the initialization of \emph{headNode}, it returns the object's initial value $\bot$. 


Next, we consider executions that incur crashes and prove the following property: Upon completion of the GRECOVER procedure, the list $\cal{L}$ induced by $prev$ pointers, starting from $tail$, contains exactly once the Node structure of every SWAP operation announced (in Line \ref{glb_declaration_point}) in the course of the execution and ends at \emph{headNode}. Moreover, the order induced by $\cal{L}$ respects operations' real-time order. 


We first now provide several lemmas and their respective proofs, Theorem~\ref{thoerm_glb_swap_is_NRL}'s proof continues following Lemma~\ref{paths_order_claim}.

The following lemma ensures that in GRECOVER at Line~\ref{glb_recover_compute_maximal_paths}, 
all Node structures ever announced by Line~\ref{glb_declaration_point} are in $V$, 
and all $prev$ pointers from Node $u$ to Node $v$ that were set in Line~\ref{glb_swap_persist_point} 
are represented by an edge $(u,v)\in E$.

\begin{lemma}
\label{lemma_all_nodes_and_pointer_are_in_graph}
    At Line~\ref{glb_recover_compute_maximal_paths} $\{v : v$ was announced by Line~\ref{glb_declaration_point}\} $\subseteq V$ and $\{(u,v) : u.prev == v\} \subseteq E$
\end{lemma}
\begin{proof}
    First Notice that every Node structure announced at Line~\ref{glb_declaration_point} is inserted into the Nodes array. Also, notice that in each cell in Nodes only a single process writes.

    In addition, every time a Node $x$ is overwritten by a new Node $y$ in Nodes[$p$] by process $p$, $y.prevExecution == x$. Therefore all of the Node structures created by $p$ can be reached by going over the list induced by $prevExecution$ pointers from $Nodes[p]$ for $p\in \{1\ldots n\}$. Also notice Nodes[0] is a special case of a list with constant length=1 representing the initial status.

    In the for loop at Line~\ref{glb_recover_for_loop_start} the recovery process goes over all process numbers from $1$ to $n$ and over $0$. For each process it goes over the list induced by its $prevExecution$ pointers, meaning $currNode$ gets assigned every Node ever announced by Line~\ref{glb_declaration_point} and Nodes[0], thus concluding that at Line~\ref{glb_recover_compute_maximal_paths} $\{v : v$ was announced by Line~\ref{glb_declaration_point}\} $\subseteq V$. In addition by going over every announced Node and adding the $TAILNODE$ link, we ensure that for every Node $u$, Node $v$ if $u.prev == v$ there is an edge $(u,v)\in E$ concluding that $\{(u,v) : u.prev == v\} \subseteq E$.
\end{proof}

The next lemma ensures every Node has at most one $prev$ or $tail$ pointer pointing at it.

\begin{lemma}\label{lemma_at_most_one_prev}
    For every Node $u$, either $tail = u$ and $\mid\{ v : v.prev = u\}\mid = 0$ or $\mid\{ v : v.prev = u\}\mid \leq 1$.
\end{lemma}
\begin{proof}
    We prove this lemma by reviewing all lines of code that assign prev pointers.
    First, notice that the lemma holds in the initial state since only $Nodes[0]$ exists and is pointed only by the tail pointer.

    When a prev pointer is assigned at Line~\ref{glb_swap_persist_point}, it is done using an atomic primitiveSwap operation, meaning that only a single SWAP operation can read the specific $prev$ value that was previously pointed by $tail$.

    When a $prev$ pointer is assigned during GRECOVER it is done in either Line~\ref{glb_recover_prev_assign}~or~\ref{glb_recover_tail_prev_assign}. In both cases, it is assigned with a Node that starts a path in the Graph$(V, E)$ and by lemma~\ref{lemma_all_nodes_and_pointer_are_in_graph} all Nodes created, and $prev$ pointers are represented in the graph meaning there is no other $prev$ pointer pointing to the Node assigned because it is a start of a maximal path in the graph $(V, E)$.
\end{proof}

We can now prove Lemma~\ref{glb_lemma_disjoint_paths}, showing that the set of paths $Paths$ computed in 
Line~\ref{glb_recover_compute_maximal_paths} are node-disjoint.

\begin{restatable}{lemma}{disjointPaths}
\label{glb_lemma_disjoint_paths}
Let $P$ and $J$ be two different paths that exist simultaneously in the global recovery process, and consider a Node $i \in P$, 
then $i \notin J$.
\end{restatable}

\begin{proof}
    From Lemmas~\ref{lemma_at_most_one_prev}~and~\ref{lemma_all_nodes_and_pointer_are_in_graph}, every Node $v\in V$ has at most one incoming edge, meaning there is at most one Node $u \in V$ such that $(u,v)\in E$. In addition, every edge $(u,v) \in E$ signifies that the node representing $u$ has its $prev$ pointer pointing to the node representing $v$. Thus, for every node $u \in V$ there is at most one $v \in V$ such that $(u,v) \in E$. Assume towards a contradiction that there exist two maximal paths $P$,$J \in Paths$, $P \ne J$ such that $i \in P$ and $i \in J$, then either there are nodes $u \in P$, $j \in J$ s.t $u \ne j$ and $(i,u),(i,j)\in E$, or there are nodes $u \in P$, $j \in J$ s.t $u \ne j$ and $(u,i),(j,i)\in E$. The first option provides a contradiction because node $i$ has two $prev$ pointers, and the second option means node $i$ has two prev pointers pointing at it in contradiction with Lemma~\ref{lemma_at_most_one_prev}.
\end{proof}

Lemma~\ref{after_recovery_lemma} shows that after a successful global recovery, the Node list starting at $tail$ is complete and holds all announced Node structures.

\begin{restatable}{lemma}{afterRecovery}
\label{after_recovery_lemma}
    At the end of a crash-free execution of the global GRECOVER procedure, there is a single list induced by $prev$ pointers starting from $tail$ and ending in $Nodes[0]$ such that all Node structures announced at Line~\ref{glb_declaration_point} are in it.
\end{restatable}

\begin{proof}
First, assume that the condition in Line~\ref{glb_recover_if_single_full_path} holds.
In this case, there is a maximal $path$ $P\in Paths$ that includes both TAILNODE and \emph{Nodes}[0]. Assume towards a contradiction that the lemma does not hold. It follows that there is a Node $i$ that isn't in $path$ $P$. There are two sub-cases to consider. Either there is another $path$ $J$ in $Paths$, or there isn't. If the latter sub-case holds, then $i$ must be in $SingleNodes$, hence SWAP will be re-executed starting from Line~\ref{glb_lineraization_point} on its behalf (in Line~\ref{glb_recover_reexecute_SWAP_for_single_node}), and because the execution is crash-free, \emph{i} will be inserted to the list induced by $prev$ pointers starting at $tail$, and this list will end in $Nodes[0]$ according to the code. The former sub-case is that $i \in J$ and $J$ is a \emph{middle fragment}, hence it is of length at least two. Let $x$ be the last Node in $J$ and let $Op$ be the operation represented by $x$. $Op$ must have executed Line~\ref{glb_lineraization_point} before the crash (otherwise no $prev$ pointer could point at $x$) but did not execute Line~\ref{glb_swap_persist_point} before the crash (since $x$ is the last node in $J$). Immediately after $Op$ executed Line~\ref{glb_lineraization_point}, \emph{tail} pointed to $x$. Since $Op$ did not execute Line~\ref{glb_swap_persist_point}, this contradicts the existence of a path in $Paths$ starting from $tail$ and ending in $Nodes[0]$.

Otherwise, the condition in Line~\ref{glb_recover_if_single_full_path} does not hold.
In this case, it follows from Lines \ref{RECOVER:classifyFragmentsStart}-\ref{glb_recover_final_line} that immediately after Line~\ref{glb_recover_tail_prev_assign} is executed by $Op$, TailPath is a list that starts from TAILNODE, ends with $Nodes[0]$, and contains all the Nodes in $V$. Thus by Lemma~\ref{lemma_all_nodes_and_pointer_are_in_graph}, this list contains all announced Nodes and the lemma holds.
\end{proof}

Lemma~\ref{paths_order_claim} ensures that the $\prec$ order can only hold in one direction for any two Paths.

\begin{restatable}{lemma}{glbPathOrder}
\label{paths_order_claim}
    Let $A$ and $B$ be two paths that exist simultaneously in the global recovery procedure. Then $A \prec B \implies B \nprec A$.
\end{restatable}

\begin{proof}
By Definition~\ref{def_path_ordering}, $A \prec B$ implies that there 
is a \emph{Node} $x \in A$ and a \emph{Node} $y \in B$ such that $x.endVts < y.startVts$. 
    From Lemma \ref{glb_lemma_disjoint_paths}, both fragments are disjoint. A new fragment is created only when a process executes Line~\ref{glb_lineraization_point} but does not execute Line~\ref{glb_swap_persist_point}. 
    This implies that all operations on fragment $A$ that completed Line~\ref{glb_lineraization_point} had done so before any of the operations on fragment $B$ performed Line~\ref{glb_lineraization_point}, or vice versa.
Since $x.endVts < y.startVts$ and $x.endVts$ is only written after executing Line~\ref{glb_lineraization_point}, and since $y.startVts$ is only set before executing Line~\ref{glb_lineraization_point} and after increasing $VTS$, 
it follows that all the operations of fragment $A$ executed Line~\ref{glb_lineraization_point}  
before any the operations of fragment $B$ have executed this line.

If we also have $B \prec A$, then by Definition~\ref{def_path_ordering}, 
there is a $Node$ $a \in A$, and a $Node$ $b \in B$, 
    such that $b.endVts < a.startVts$.
    It follows that the operation represented by $b$ executed Line~\ref{glb_lineraization_point}
    before the operation represented by $a$, which is a contradiction.
\end{proof}

To conclude the proof of Theorem~\ref{thoerm_glb_swap_is_NRL}, 
let $N_1, N_2 = N_1.prev, N_3 = N_2.prev  ... N_l = N_{l-1}.prev$ such that $N_1 = Tail$ and $N_l = Nodes[0]$ denote the Node structures in the list induced by \emph{prev} pointers beginning at \emph{tail}. In the initial state, $Tail = N_1 = N_l = Nodes[0]$. For a node \emph{x}, denote by \emph{proc(x)} the process that created and announced node \emph{x} and let $H = (SWAP_{proc(N_{l-1})}(N_{l-1}.val)$, $ SWAP_{proc(N_{l-2})}(N_{l-2}.val)$  $...$ $SWAP_{proc(N_{1})}(N_{1}.val))$, where $N_1, \ldots, N_l$ are the nodes in the single fragment that exists immediately after the GRECOVER procedure. Let $\alpha$ denote the execution that ends when GRECOVER completes. From Lemma~\ref{after_recovery_lemma}, $H$ is a sequential history that contains all the SWAP operations that were announced in $\alpha$. Moreover, for any extension $\beta$ of $\alpha$ in which all these SWAP operations return following the execution of their individual RECOVER procedures, they return the same values in $\beta$ and in $H$.
Operations that weren't announced in $\alpha$ but whose individual RECOVER procedures were executed in $\beta$, re-execute SWAP(val) (Line~\ref{glb_recover_rerun_swap_fully}) and are therefore linearized when they execute Line~\ref{glb_lineraization_point}.

It remains to show that for any two announced SWAP operations $\text{SWAP}_1, \text{SWAP}_2$, if $\text{SWAP}_1$ terminates in $\alpha$ before $\text{SWAP}_2$ starts, then $\text{SWAP}_1$ precedes $\text{SWAP}_2$ in $H$. Let $\text{Node}_1$, $\text{Node}_2$ be the Nodes created by $\text{SWAP}_1$ and $\text{SWAP}_2$, respectively.
Assume towards a  contradiction that $\text{SWAP}_2$ precedes $\text{SWAP}_1$ in $H$. Then there is a path induced by \emph{prev} pointers from $\text{Node}_1$ to $\text{Node}_2$ when GRECOVER terminates.
There are two cases to consider. The first case is that all the \emph{prev} pointers of the path between $\text{Node}_1$ and $\text{Node}_2$ were assigned by SWAP operations and not during recovery. This implies that between the time when $\text{SWAP}_1$ executed Line~\ref{glb_lineraization_point} and the time when $\text{SWAP}_2$ executed Line~\ref{glb_lineraization_point}, no process performed Line~\ref{glb_lineraization_point} without performing Line~\ref{glb_swap_persist_point}, otherwise $\text{Node}_1$ and $\text{Node}_2$ would have been on separate fragments just before recovery. Consequently, $\text{Node}_1$ and $\text{Node}_2$ are on the same fragment and $\text{Node}_1$ precedes $\text{Node}_2$ before a crash. It follows that $\text{Node}_1$ precedes $\text{Node}_2$ also in the single fragment that exists when GRECOVER terminates, hence $\text{SWAP}_2$ follows $\text{SWAP}_1$ in $H$. This is a contradiction.

The second case is that the path induced by prev pointers from $\text{Node}_1$ to $\text{Node}_2$ was formed during recovery. There are two sub-cases to consider. The first is that during recovery, one of the Nodes is in \emph{singleNodes} while the other is in a full path from TAILNODE to $\emph{Nodes}[0]$ (thus the condition in Line~\ref{glb_recover_if_single_full_path} is true). In this case, since operations that create Nodes in \emph{singleNodes} did not complete, it must be that $\text{Node}_2$ is in \emph{singleNodes} and $\text{Node}_1$ is on the full path. From Line~\ref{glb_recover_reexecute_SWAP_for_single_node}, $\text{SWAP}_2$ will be re-executed and so $\text{Node}_2$ will be placed before $\text{Node}_1$ in the list induced by \emph{prev} pointers starting from \emph{tail} at the end of GRECOVER, hence $\text{SWAP}_2$  follows $\text{SWAP}_1$ in $H$, a contradiction.

The second sub-case is when the condition of Line~\ref{glb_recover_if_single_full_path} is not satisfied.
This implies that just before crashing, $\text{Node}_1$ and $\text{Node}_2$ were on different fragments. Let $A$ and $B$ respectively denote the paths representing the fragments on which $\text{Node}_1$ and $\text{Node}_2$ were just before the crash. Since $\text{Node}_1.endVts < \text{Node}_2.startVts$ must hold, from Definition~\ref{def_path_ordering}, $A \prec B$ holds. Consequently, from Lemma~\ref{paths_order_claim} $B \nprec A$, therefore immediately after GRECOVER terminates there is a path from $\text{Node}_2$ to $\text{Node}_1$, hence $\text{SWAP}_2$ follows $\text{SWAP}_1$ in $H$, a contradiction.
\end{proof}

\section{Detectable Swap Algorithm for the Independent Failures Model} \label{section_ind_swap}

In the independent failures model, each process may crash and recover independently of other processes. A recoverable algorithm for this model must therefore allow one or more processes to execute RECOVER concurrently, while other processes may concurrently execute their SWAP operations. In order to handle this concurrency correctly, we introduce two key changes to Algorithm \ref{alg_ind_swap}. First, the RECOVER procedure now synchronizes concurrent invocations by using a starvation-free RME lock, implemented from reads and writes only, such as that proposed by \cite{golab2016recoverable}. This serializes the execution of the recovery code. 
The goal of the second change is to allow the recovery code to wait for a concurrent SWAP operation $Op$ to either complete or crash. Only once this happens, can the recovery code add the Node representing $Op$ to graph \emph{\cal{G}}. 

The pseudo-code of the RECOVER procedure for the independent failures model is presented by Algorithm \ref{alg_ind_swap}. The few additions done in the pseudo-code of SWAP are presented in \textcolor{blue}{blue} font in Algorithm~\ref{alg_glb_swap}. These consist of adding an \emph{inWork} field (initialized to $0$) to the \emph{Node} structure, setting it (in Line \ref{ind_swap_inwork_1}) just before the SWAP operation is announced in Line \ref{glb_declaration_point}, and resetting it (in Line \ref{ind_swap_inwork_0}) immediately after the \emph{endVTS} field is updated in Line \ref{ind_swap_collect_endVts}.


\begin{algorithm}[bt!]
\small
\caption{Recoverable detectable SWAP, for the independent failures model.}\label{alg_ind_swap}
\begin{algorithmic}[1]
\Procedure {RECOVER}{$val$} \Comment{executed by process i}
    \State $myNode \leftarrow Nodes[i]$
    \If{$myNode == null$ or $myNode.seq < SEQ_i$} \label{ind_recover_check_anounced_op}
        \State \Return SWAP($val$) \label{ind_recover_rerun_swap}
    \EndIf
    \State $myNode.inWork \leftarrow 2$ \label{ind_recover_announce_recovering}
    \State \emph{mutex}.\emph{lock}() 
    \label{ind_recover_acquire_lock}
    \If{$myNode.prev \ne null$} \label{ind_recover_if_prev_not_null}
        \State GoTo Line~\ref{ind_swap_recover_finished} \label{ind_recover_goto_return}
    \EndIf
    \State $V_1, E_1 \leftarrow gatherGraph()$ \label{ind_swap_recover_afterV1} 
    \State $tailNode \leftarrow tail$ 
    \label{ind_recover_read_tail_to_tailnode}
    \State $await(tailNode.inWork \in \{0,2\})$
    \State $V_2, E_2 \leftarrow gatherGraph()$
    \label{ind_recover_second_call_to_gather_graph}
    
    \State $V_2 \leftarrow V_2 \cup \{TAILNODE, tailNode\}$
    \State $E_2 \leftarrow E_2 {\cup} \{(TAILNODE, tailNode)\}$ 
    \label{ind_recover_add_tail_edge_to_e2}
    \State $V \leftarrow V_1 \cup V_2$\label{ind_swap_V_created}
    \State $E \leftarrow E_1 \cup E_2$
    \label{ind_recover_E_1_and_E_2_added_to_E}

    \State Compute set $Paths$ of maximal paths in graph ${\cal{G}}=(V,E)$
    \label{ind_recover_compute_maximal_paths}
    \For {$path \in Paths$}
        \If {$myNode \in path$}
            \State $myPath \leftarrow path$
        \EndIf
    \EndFor

    \If{$len(myPath) == 1$}
        \State re-execute SWAP from Line~\ref{ind_swap_primitive_swap} for $myNode$
        \State \emph{mutex}.\emph{release}()
        \State \Return $myNode.prev.val$
    \EndIf
    
    \State $MiddlePaths \leftarrow \emptyset $ \Comment{May include SingleNodes}
\For{$path \in Paths$}
    \If{$TAILNODE \in path$}
        \State $TailPath \leftarrow path$
    \ElsIf{$Nodes[0] \in path$}
        \State $HeadPath \leftarrow path$
    \Else
        \State $MiddlePaths \leftarrow MiddlePaths \cup \{path\}$
    \EndIf
\EndFor
\State $ordPaths \leftarrow sort(MiddlePaths)$ in non-increasing $\succ$ order \label{ind_swap_path_ordering} 
\label{ind_recover_j_index_orderedpathlist_declared}
\State $candidate \leftarrow $first path $C \in ordPaths$ after $myPath$ s.t. $start(C)\in V_1$ or null if no such $C$
\label{ind_recover_find_candidate}
\If{$candidate \ne null$}
    \State $myNode.prev \leftarrow start(candidate)$ \label{ind_recover_prev_assign}
\Else
    \State $myNode.prev \leftarrow start(HeadPath)$
    \label{ind_recover_prev_assign_headpath}
\EndIf
\State $myNode.endVts \leftarrow collect(VTS)$\label{ind_recover_collect_endVTS}\label{ind_swap_recover_finished}
\State \emph{mutex}.\emph{release}()
\State \Return $myNode.prev.val$
\label{ind_recover_return_at_end}
\EndProcedure
\end{algorithmic}
\end{algorithm}

\begin{algorithm}[th]
\small
\begin{algorithmic}[1]
\Procedure {gatherGraph}{$ $} \Comment{Used by Algorithm~\ref{alg_ind_swap}}
    \State $V \leftarrow \emptyset$
    \State $E \leftarrow \emptyset$
    \For{$j$ from $0$ to $n$}
        \State $currNode \leftarrow Nodes[j]$
        \While{$currNode \ne null$}
            \State $await(currNode.inWork \in \{0,2\})$
            \label{gather_graph_await_inwork}
            \State $V \leftarrow V \cup \{currNode\}$
            \If {$currNode.prev \ne null$}
                \State $V \leftarrow V \cup \{currNode.prev\}$
                \State $E \leftarrow E \cup \{(currNode, currNode.prev)\}$
            \EndIf
            \State $currNode \leftarrow currNode.prevExecution$
        \EndWhile
    \EndFor
\EndProcedure
\end{algorithmic}
\end{algorithm}

RECOVER first checks whether the \emph{Node} of the crashed operation was announced (Line~\ref{ind_recover_check_anounced_op}). If it wasn't, it re-executes SWAP(\emph{val}) (Line~\ref{ind_recover_rerun_swap}).
Otherwise, it signals that it is performing RECOVER by writing $2$ to the \emph{inWork} field of its Node and then attempts to acquire $mutex$ (Lines \ref{ind_recover_announce_recovering}-\ref{ind_recover_acquire_lock}). 
Next it checks if the operation already has a value to return and if so, returns this value (Lines~\ref{ind_recover_if_prev_not_null}-\ref{ind_recover_goto_return}, \ref{ind_swap_recover_finished}-\ref{ind_recover_return_at_end}). 

Lines~\ref{ind_swap_recover_afterV1}~-~\ref{ind_recover_compute_maximal_paths} construct the graph $\cal{G}$. 
Unlike the system-wide failure construction, we go over all Nodes \emph{twice}, 
thus constructing two sets of Nodes, $V_1$ and $V_2$. In addition, candidate paths chosen from $MiddlePaths$ are only chosen if the start of their fragment is from $V_1$ (Line~\ref{ind_recover_find_candidate}).
This is done because, after a single traversal that constructs $V_1$, there might be a Node in $V_1$ that is the start of a fragment that may be pointed by some $Node$ $x \notin V_1$. As we prove, a second traversal ensures that 
the problem cannot occur for a graph constructed based on $V = V_1 \cup V_2$. During each traversal of \emph{Nodes} the algorithm waits for each \emph{Node} $v$'s \emph{inWork} field, to be $0$ or $2$ before adding it to $V$ (Line~\ref{gather_graph_await_inwork} of \emph{gatherGraph}). This ensures that the operation $Op$ that created $v$ isn't concurrently executing its critical section of SWAP and therefore $v$ cannot change after being added to $V$.

The rest of the procedure is similar to that of GRECOVER in Algorithm~\ref{alg_glb_swap}. All maximal Paths in $\cal{G}$ are calculated and classified to \emph{TailPath}, \emph{HeadPath} and \emph{MiddlePaths}. In the end a single candidate path either from the sorted \emph{MiddlePaths} or the \emph{HeadPath} is selected to be linked to \emph{myNode} (Lines \ref{ind_recover_j_index_orderedpathlist_declared}-\ref{ind_recover_prev_assign_headpath}). Note that as we ensure only for Nodes in $V_1$ that any Node pointing at them is in $V$, only such Nodes are considered as candidates (Line~\ref{ind_recover_find_candidate}). As we prove, this ensures linearizability. 
The traversals that construct $V_1$ and $V_2$ are implemented by the helper function 
\emph{gatherGraph}.

\subsection{Correctness proof for the Swap implementation in the independent failures model} 
\label{appndx_swap_ind}

\begin{lemma}\label{lemma_ind_all_before_are_in_V1}
    Let $myOp$ be process $i$'s operation represented by $myNode$.
    In $i$'s execution of RECOVER, after completing Line~\ref{ind_swap_recover_afterV1} all Nodes representing operations that completed Line~\ref{ind_swap_primitive_swap} before $myOp$  completed it are in $V_1$.
\end{lemma}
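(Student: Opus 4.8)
The plan is to establish the claim by a short temporal-and-reachability chain. Fix any operation $op$, represented by a node $n$, whose primitive swap (Line~\ref{ind_swap_primitive_swap}) completes before $myOp$'s, and show that by the time $i$'s call to \emph{gatherGraph} in Line~\ref{ind_swap_recover_afterV1} walks the \emph{Nodes} array, $n$ has already been announced and is reachable there; since \emph{gatherGraph} inserts into $V$ every node it visits, this yields $n\in V_1$.

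First I would pin down the relevant real-time instants. By the code order of SWAP, $op$ announces $n$ in Line~\ref{glb_declaration_point} strictly before it executes its primitive swap. Because all primitive swaps on \emph{tail} are atomic, and hence totally ordered, and $op$'s swap precedes $myOp$'s swap by hypothesis, the announcement of $n$ precedes $myOp$'s execution of Line~\ref{ind_swap_primitive_swap}. It then remains to place the construction of $V_1$ after this instant: we reach Line~\ref{ind_swap_recover_afterV1} only after passing the announced/sequence check and the \emph{prev}$\,\neq\,$null check, so $myOp$ is the (still pending) operation that triggered this RECOVER, the crash of process $i$ occurs after $myOp$'s primitive swap, and \emph{gatherGraph} runs after that crash (even across several crash--restart cycles of RECOVER). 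Chaining these facts, $n$ is announced before \emph{gatherGraph} reads \emph{Nodes}.

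The second ingredient is reachability, reusing the reasoning behind Lemma~\ref{lemma_all_nodes_and_pointer_are_in_graph}: entries of \emph{Nodes} only advance to newer nodes, and each newly announced node's \emph{prevExecution} field links back to its predecessor by the same process, so once $n$ has been announced it remains reachable from $Nodes[\mathrm{proc}(op)]$ along \emph{prevExecution} pointers thereafter. Hence, when \emph{gatherGraph} reaches the iteration $j=\mathrm{proc}(op)$ and traverses that chain, it necessarily visits $n$ and adds it to $V$. The \emph{inWork}-await in Line~\ref{gather_graph_await_inwork} only delays visiting a node, never skips one, so it does not affect membership; and since the statement presupposes that Line~\ref{ind_swap_recover_afterV1} completes, all the (blocking) awaits have returned. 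This gives $n\in V_1$.

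The step most in need of care is the temporal placement of \emph{gatherGraph} relative to $myOp$'s primitive swap, since RECOVER may itself crash and restart, and since the phrase ``before $myOp$ completed it'' is meaningful only when $myOp$ in fact executed Line~\ref{ind_swap_primitive_swap} --- the genuinely interesting recovery case, in which $myOp$ swapped but crashed before persisting \emph{prev} in Line~\ref{ind_swap_prev_assign}; I would make this assumption explicit at the outset. I expect the remaining reachability reasoning to be routine given Lemma~\ref{lemma_all_nodes_and_pointer_are_in_graph}, the only point worth flagging being that the \emph{inWork} awaits are what matter for the subsequent stability lemma (that a visited node cannot change afterward) rather than for this membership claim.
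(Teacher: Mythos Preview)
Your proposal is correct and follows essentially the same approach as the paper's proof: both handle the vacuous case where $myOp$ never executed Line~\ref{ind_swap_primitive_swap}, then chain the temporal ordering (announcement of $n$ $\to$ $op$'s primitive swap $\to$ $myOp$'s primitive swap $\to$ crash $\to$ \emph{gatherGraph}) together with the \emph{prevExecution}-reachability argument from $Nodes[\mathrm{proc}(op)]$ to conclude $n\in V_1$. Your version is somewhat more explicit about the \emph{inWork} awaits and repeated crash--restart cycles of RECOVER, but these are refinements of detail rather than a different line of argument.
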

\begin{proof}
    The lemma holds vacuously if \emph{myOp} did not execute Line~\ref{ind_swap_primitive_swap}.
    Let $Node_0$ be a Node created by $op_0$ performed by process $j$ that completed Line~\ref{ind_swap_primitive_swap} before $myOp$ completed it. Because $myOp$ completed Line~\ref{ind_swap_primitive_swap} before crashing, and $op_0$ completed Line~\ref{ind_swap_primitive_swap} before $myOp$, upon starting $i$'s RECOVER procedure $Node_0$ is in the List induced by $prevExecution$ pointers starting from $Nodes[j]$. When executing \emph{gatherGraph} in Line~\ref{ind_swap_recover_afterV1}, process $i$ goes over all Nodes in the list induced by \emph{prevExecution} pointers starting from \emph{Nodes}[$j$], therefore, $Node_0$ will be added to $V_1$.
\end{proof}

The following lemma ensures that every Node Structure has at most one prev pointer or tail pointing at it.
\begin{lemma}\label{lemma_ind_no_two_pointer_to_1_node}
     For every Node $u$, either $tail = u$ and $\mid\{ v : v.prev = u\}\mid = 0$ or $\mid\{ v : v.prev = u\}\mid \leq 1$.
\end{lemma}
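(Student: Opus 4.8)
The plan is to mirror the proof of Lemma~\ref{lemma_at_most_one_prev} from the system-wide model, maintaining the stated property as an inductive invariant over the execution and checking it against every step that either writes \emph{tail} or assigns a \emph{prev} field. The lines that assign a \emph{prev} pointer in the independent-failures algorithm are: the initial state; the SWAP assignment in Line~\ref{ind_swap_prev_assign}, which writes the value returned by the primitiveSwap of Line~\ref{ind_swap_primitive_swap}; the re-execution of SWAP from Line~\ref{ind_swap_primitive_swap} performed by RECOVER on a singleNode; and the two recovery assignments $myNode.prev \leftarrow start(candidate)$ (Line~\ref{ind_recover_prev_assign}) and $myNode.prev \leftarrow start(HeadPath)$ (Line~\ref{ind_recover_prev_assign_headpath}). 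The base case is immediate, since initially only $Nodes[0]$ exists and is pointed at solely by \emph{tail}.

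First I would dispatch the two primitiveSwap-based cases exactly as in the global proof. Because primitiveSwap is atomic, each value ever held in \emph{tail} is returned to at most one operation; hence at most one Node acquires a given target as its \emph{prev} through Line~\ref{ind_swap_prev_assign}, whether that line is executed inside SWAP or inside the singleNode branch of RECOVER, and the same swap that removes a Node $u$ from \emph{tail} is the unique one that can later point at $u$. This also preserves the first disjunct of the invariant: immediately after a primitiveSwap the new \emph{tail} is the freshly swapped-in Node, whose \emph{prev} has not yet been written, so the current \emph{tail} never has an incoming \emph{prev}.

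The substance of the proof is the two recovery assignments. Here I would first use the mutex acquired in Line~\ref{ind_recover_acquire_lock} to serialize all recovery-time \emph{prev} writes, so that they may be reasoned about one at a time. The target of each such write is $start(candidate)$ or $start(HeadPath)$, which by construction is the start of a maximal path in ${\cal{G}}=(V,E)$ and therefore has no incoming edge in the graph; the goal is to lift this to ``no incoming \emph{prev} in the live global state.'' The key ingredient I would establish is a completeness property of the graph: every Node that points (via \emph{prev}) at a Node of $V_1$ is itself in $V$, and consequently every real incoming-\emph{prev} pointer into a graph node, whether placed by an earlier SWAP or by an earlier (serialized) recovery, already appears as an edge. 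Candidates are restricted to paths whose start lies in $V_1$ (Line~\ref{ind_recover_find_candidate}); combining completeness with ``$start$ has no incoming edge in ${\cal{G}}$'' yields that the chosen target has no incoming \emph{prev} at all, so the assignment cannot create a second pointer into it. The $start(HeadPath)$ fallback is handled by the analogous completeness statement for the head fragment's start. Note that the $\succ$-ordering of Definition~\ref{def_path_ordering} is irrelevant to this lemma and can be ignored.

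I expect the main obstacle to be precisely this completeness property, because of the race it must rule out: an operation $Op_{succ}$ that has already executed its primitiveSwap (thereby removing the prospective target from \emph{tail}) but has not yet executed Line~\ref{ind_swap_prev_assign} is poised to write $Op_{succ}.prev \leftarrow \text{target}$, and if recovery concurrently picks the same target a double pointer results. This is exactly what the \emph{inWork} flag and the awaits guard: such an $Op_{succ}$ has \emph{inWork} $=1$ (set in Line~\ref{ind_swap_inwork_1}, cleared only in Line~\ref{ind_swap_inwork_0}), so the traversal in \emph{gatherGraph} blocks on Line~\ref{gather_graph_await_inwork} until $Op_{succ}$ either finishes---making its edge present in the graph, so the target is no longer a path start---or crashes before Line~\ref{ind_swap_prev_assign}. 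The two traversals producing $V_1$ and $V_2$ (Lines~\ref{ind_swap_recover_afterV1} and~\ref{ind_recover_second_call_to_gather_graph}) are needed so that a Node announced after a given process slot was scanned during the first pass, yet pointing at a $V_1$ Node, is still captured in $V$ by the second pass; arguing that the second pass closes this gap---together with Lemma~\ref{lemma_ind_all_before_are_in_V1} to control which Nodes are guaranteed present in $V_1$---will be the delicate and decisive part of the argument.
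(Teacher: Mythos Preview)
Your proposal is correct and follows essentially the same approach as the paper: an inductive invariant over all \emph{prev}-assigning steps, with the primitiveSwap cases handled as in Lemma~\ref{lemma_at_most_one_prev}, mutex-serialization for the recovery assignments, and the completeness argument that any Node already pointing at a $V_1$ start is captured in $V$ via the \emph{inWork} awaits and the second \emph{gatherGraph} pass. The paper's write-up is more explicit than your sketch about the role of the \emph{tailNode} read (Line~\ref{ind_recover_read_tail_to_tailnode}) and the $(TAILNODE,tailNode)$ edge in closing the race, and it treats the $start(HeadPath)$ case by a separate argument (the head fragment's start cannot be the current \emph{tail} once a failed middle fragment exists) rather than by the same completeness property, but these are details within the same overall plan.
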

\begin{proof}
    Similarly to Lemma~\ref{lemma_at_most_one_prev}, we prove this lemma by reviewing all lines of code that assign prev pointers.
    First, notice that the lemma holds in the initial state since only $Nodes[0]$ exists and it is pointed only by the tail pointer.

    As in the system-wide failures model algorithm, when a prev pointer is assigned at Line~\ref{ind_swap_prev_assign}, it is done using an atomic $primitiveSwap$ operation, meaning that only a single SWAP operation can read the specific $prev$ value that was previously pointed by $tail$.

    The difference from the system-wide model is that in the independent failures model algorithm a process might be in recovery while another process continues to execute SWAP. For this matter we must ensure that a recovering process does not assign a prev pointer to a Node that is about to be assigned by a process executing SWAP. 

    In RECOVER a $Node$ $v$'s prev pointer can be assigned in either Line~\ref{ind_recover_prev_assign}~or~\ref{ind_recover_prev_assign_headpath}.
    In both cases it is assigned by process $i$ to a Node that is a start of a maximal path in Graph ${\cal{G}}=(V,E)$, Let $y$ be the Node $v.prev$ is assigned to.
    It is left to show that for any Node $x$ either $y$ is not assigned to $x.prev$ at any stage or if  $x.prev==y$ then $(x,y)\in E$ during the execution of $i$'s RECOVER procedure meaning $y$ is not the start of a maximal path in ${\cal{G}}$.

    Let process $j$ be the process running the operation that created $x$.
    $j$ can assign a prev pointer to $x$ either in RECOVER or in SWAP.
    In the first case it will not be assigned to $y$ as it will await for process $i$ to release the $mutex$ lock before assigning a prev pointer and if the $mutex$ was released, then both $v$ and $y$ would be in $V_1$ for $j$'s recovery and therefore $y$ will not be the start of a maximal path in Graph ${\cal{G}}=(V,E)$ for $j$'s recovery.
    It is necessary to note that the Critical Section Re-entry (CSR) \cite{golab2016recoverable} property of the RME $mutex$ lock guarantees that if $i$ crashes during its recovery, $i$ is the only process allowed to acquire the $mutex$ lock upon its subsequent recovery.
    Specifically, $j$ will wait for $i$ to release $mutex$ even if $i$ crashes during its recovery.
    
   In the second case, during SWAP $y$ can be chosen to be assigned to $x.prev$ by $j$'s execution of Line~\ref{ind_swap_primitive_swap}, and is returned as the previous tail. 
   $y$ was also chosen to be assigned by $i$ during RECOVER, this can be done in either Line~\ref{ind_recover_prev_assign}~or~\ref{ind_recover_prev_assign_headpath}. In the latter case it is assigned to the start of the HeadPath while $v$ is the end of a failed fragment meaning the $TailPath$ can not be the $HeadPath$ because there is a failed fragment. Specifically $y$ cannot be pointed by $tail$ during $j$'s execution of Line~\ref{ind_swap_primitive_swap} as $y$ is the start of the $HeadPath$.
    
    The former case means $y$ is in $V_1$ for $i$'s recovery as only candidate paths from $MiddlePaths$ that start with a Node from $V_1$ are considered by Line~\ref{ind_recover_find_candidate}.
    Assume $y$ was also chosen by $j$ to be assigned to $x.prev$ when $j$ runs Line~\ref{ind_swap_primitive_swap}. This means that $x$ is not in $V_1$ for $i$'s recovery because if it was then $i$ would have waited for $x.inWork$ to be $0$ or $2$ before adding it to $V_1$ (Line~\ref{gather_graph_await_inwork} of gatherGraph) and when $x.prev$ is assigned to $y$, $x.inWork ==1$.
    Specifically that would mean that $(x,y)\in E_1$ for $i$'s recovery and $y$ is not the start of a maximal path.
    We conclude that during $i$'s first call to gatherGraph $x$ is not yet announced.
    
    $j$ chose $y$ to be assigned to $x$ during its primitveSwap of $tail$ meaning that when $y$ is added to $V_1$, $tail == y$. It follows that when $i$ reads $tail$ to $tailNode$ (Line~\ref{ind_recover_read_tail_to_tailnode}), either $tail=y$ (meaning $tailNode$ also equals $y$) or $x$ already performed its primitiveSwap.
    Here if $tailNode == y$ then an edge $(TAILNODE, y)$ is added to $E_2$ (Line~\ref{ind_recover_add_tail_edge_to_e2}).
    Otherwise during the second call to gatherGraph (Line~\ref{ind_recover_second_call_to_gather_graph}) $x$ already performed its primitiveSwap and $i$ will wait for $x.inWork$ to be $0$ or $2$ (Line~\ref{gather_graph_await_inwork} of gatherGraph) therefore it will wait for $x.prev$ to equal $y$ and $(x,y)$ will be in $E_2$. In both cases $E_2$ is then added to $E$ (Line~\ref{ind_recover_E_1_and_E_2_added_to_E}), concluding that there is an edge in $E$ pointing to $y$. It follows that $y$ is not a start of a maximal path in ${\cal{G}}$ for $i$'s recovery.
    
\end{proof}

\begin{lemma}\label{lemma_ind_ordering_is_correct}
    For any $Node$ $x$ and any $Node$ $y\ne x$ such that $y$ is in the list induced by $prev$ pointers starting from $x$, $y.startVts \ngtr x.endVts$.
\end{lemma}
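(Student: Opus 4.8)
The plan is to reduce the (timestamp) statement to a statement about the real-time order of the primitive swaps of Line~\ref{ind_swap_primitive_swap}, which, unlike the $\ngtr$ relation on vector timestamps, is transitive and therefore chains along the prev-chain. First I would record two facts about the global vector timestamp $VTS$: (i) each entry $VTS[j]$ is non-decreasing and is bumped (at Line~\ref{SWAP:incVTSEntry}) exactly once per operation of $p_j$, strictly before that operation's primitiveSwap; and (ii) a node's $startVts$ is collected before its primitiveSwap (Line~\ref{SWAP:collectTostartVTS}) and its $endVts$ after it (Lines~\ref{ind_swap_collect_endVts},\ref{ind_recover_collect_endVTS}). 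From (i)--(ii) I obtain two converse facts. \emph{Fact A}: if $u$'s primitiveSwap precedes $w$'s in real time, then $u.startVts \le w.endVts$, because the first collect completes before the second starts and $VTS$ is monotone. \emph{Fact B}: if $w.startVts > u.endVts$, then some coordinate $r$ is strictly larger, so $p_r$ performed an increment between $u$'s read of coordinate $r$ (which is after $u$'s primitiveSwap) and $w$'s read of coordinate $r$ (which is before $w$'s primitiveSwap); hence $u$'s primitiveSwap precedes $w$'s.

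Given these, by Fact A it suffices to prove the purely order-theoretic claim that the prev-chain respects real-time primitiveSwap order: if $y$ is a strict prev-ancestor of $x$, then $y$'s primitiveSwap precedes $x$'s; applying Fact A to $y$ and $x$ then yields $y.startVts \le x.endVts$, hence $y.startVts \ngtr x.endVts$. Because real-time precedence is transitive, I would prove the order claim by induction along the chain $x = a_0,\, a_1 = a_0.prev,\, \ldots,\, a_k = y$, establishing for every single edge $a_i.prev = a_{i+1}$ that $a_{i+1}$'s primitiveSwap precedes $a_i$'s. For an edge assigned inside SWAP (Line~\ref{ind_swap_prev_assign}), $a_{i+1}$ is exactly the value returned by $a_i$'s primitiveSwap as the previous $tail$, so its primitiveSwap immediately precedes $a_i$'s and the edge is trivially sound.

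The only nontrivial case---and the step I expect to be the main obstacle---is an edge assigned during RECOVER (Line~\ref{ind_recover_prev_assign} or~\ref{ind_recover_prev_assign_headpath}), where $a_i = myNode$ (which did perform its primitiveSwap but crashed before persisting $prev$) and $a_{i+1}$ is the start of the chosen candidate fragment (or of $HeadPath$). Here I would combine three ingredients. First, $a_{i+1} = start(candidate)$ is required to lie in $V_1$ (Line~\ref{ind_recover_find_candidate}), and $V_1$ is gathered (Line~\ref{ind_swap_recover_afterV1}) before $myNode$ collects its $endVts$ at Line~\ref{ind_recover_collect_endVTS}; since $start(candidate).startVts$ was collected before it was gathered, monotonicity gives $a_{i+1}.startVts \le a_i.endVts$, which already settles the lemma for the immediate ancestor. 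Second, the candidate is taken \emph{after} $myPath$ in the non-increasing $\succ$-sorted order, so $candidate \nsucc myPath$, and together with Lemma~\ref{paths_order_claim} this rules out the reverse timestamp domination. Third, Lemma~\ref{lemma_ind_all_before_are_in_V1} guarantees that every fragment whose operations swapped in before $myNode$ is fully present in $V_1$, and Lemma~\ref{lemma_ind_no_two_pointer_to_1_node} guarantees the attachment point is a genuine, singly-referenced fragment start.

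The delicate point, and where the bulk of the work lies, is that $\succ$ is only a sub-relation of the real-time primitiveSwap order---two fragments that overlap in wall-clock time can be $\succ$-incomparable---so the sort alone does not pin down that $start(candidate)$ is an order-predecessor of $myNode$, which is what the chaining argument ultimately needs (the per-edge bound $a_{i+1}.startVts \le a_i.endVts$ does \emph{not} by itself forbid $a_{i+1}$ from being a primitiveSwap-successor, and a newer deep ancestor would break the lemma). I would close this gap by assuming toward a contradiction that $start(candidate)$ is a primitiveSwap-successor of $myNode$; then its entire (contiguous-in-primitiveSwap-order) fragment is newer than $myPath$, and I would argue that the $V_1$-membership requirement together with the $await$ on $inWork$ at Line~\ref{gather_graph_await_inwork} and the second traversal forces an incoming edge onto $start(candidate)$, contradicting its being a maximal-path start. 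Verifying this interaction of the $\succ$-sort, the $V_1$ guard, and the $inWork$ synchronization is the technically hardest component of the argument.
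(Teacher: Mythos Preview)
Your reduction to primitiveSwap order is a genuinely stronger claim than the lemma, and it is false. Take operations whose primitive swaps occur at times $1, 1.5, 2, 2.5, 3, 3.5$ (plus an earlier one on the head path); the operations swapping at times $1$ (this is $myNode$), $2$, and $3$ crash before Line~\ref{ind_swap_prev_assign} while the others persist normally. The resulting middle fragments are $myPath = \{t, myNode\}$ (swaps at $1.5, 1$) and $G = \{u, w\}$ (swaps at $2.5, 2$), with the tail path above and the head path below. If the start-collects of $u$ and $w$ overlap the end-collect of $t$, then $G$ and $myPath$ are $\succ$-incomparable, so the sort at Line~\ref{ind_swap_path_ordering} may legitimately place $myPath$ first. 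Since $start(G) = u \in V_1$, $G$ is selected as candidate and Line~\ref{ind_recover_prev_assign} sets $myNode.prev \leftarrow u$ --- yet $u$ swapped \emph{after} $myNode$. Your proposed contradiction does not fire: the unique node that could supply an incoming edge to $u$ is the one that swapped at time $3$, which crashed before persisting and is blocked on the mutex throughout $myNode$'s RECOVER, so its $prev$ stays null and $u$ remains a maximal-path start in ${\cal{G}}$. The lemma nevertheless holds in this execution (because $G \not\succ myPath$ and because $myNode.endVts$ is collected late at Line~\ref{ind_recover_collect_endVTS}), but your chained per-edge primitiveSwap-precedence invariant does not.

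The paper's argument avoids this trap by never attempting to recover primitiveSwap order. It uses directly that the candidate appearing after $myPath$ in the non-increasing sort forces $candidate \not\succ myPath$, which by Definition~\ref{def_path_ordering} unfolds to $n_c.startVts \ngtr n_m.endVts$ for \emph{every} $n_c$ in the candidate fragment and \emph{every} $n_m$ in $myPath$; combined with a case analysis on whether the RECOVER edge lands in a middle path, the tail path, or the head path (and an implicit recursion on earlier RECOVER-mended edges already absorbed into the head path), this yields exactly the inequality the lemma asserts. The invariant that chains is the fragment-level relation $\not\succ$, which is strictly weaker than primitiveSwap precedence and, unlike your invariant, is preserved by the mending in Lines~\ref{ind_recover_prev_assign}--\ref{ind_recover_prev_assign_headpath}.
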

\begin{proof}
    Let $l$ be the list induced by $prev$ pointers from $x$ to $y$.
    We split the proof to 2 cases. The first is in $l$ all $prev$ pointers were assigned in SWAP by Line~\ref{ind_swap_prev_assign}. The second is that there exist $Nodes$ $i,m$ such that $i,m \in l$ and $i.prev == m$ and $i.prev$ was not assigned by Line~\ref{ind_swap_prev_assign}.

    For the first case the proof is straightforward as that means $y$'s operation has completed Line~\ref{ind_swap_primitive_swap} before $x$'s operation completed it. Therefore $y.startVts$ was collected (Line~\ref{SWAP:collectTostartVTS}) before $x.endVts$ was collected by Line~\ref{ind_swap_collect_endVts}. Since VTS is incremented before collecting $y.startVts$ (Line~\ref{SWAP:incVTSEntry}) and VTS can only be incremented, it follows that $y.startVts \ngtr x.endVts$.

    For the second case note that $i$ is in the list induced by $prev$ pointers starting at $x$ and $y$ is in the list induced by $prev$ pointers starting at $m$. When $i.prev$ is assigned to $m$ during recovery it is done in either Line~\ref{ind_recover_prev_assign}~or~\ref{ind_recover_prev_assign_headpath}. If it is assigned in Line~\ref{ind_recover_prev_assign} it is either done to a candidate path that is smaller or equal by $\succ$ order (Definition~\ref{def_path_ordering}) meaning $y.startVts \ngtr x.endVts$, or $i$ is on the $TailPath$. If $i$ is on the $TailPath$ then $x$'s operation performed Line~\ref{ind_swap_primitive_swap} after any $Node$ $y$ that is in the list induced by $prev$ pointers starting from $m$ performed it also concluding $y.startVts \ngtr x.endVts$.
    
    Otherwise it is assigned to the $HeadPath$. Operations on the $HeadPath$ either succeeded and completed Line~\ref{ind_swap_primitive_swap} before $x$ ended meaning $y.startVts \ngtr x.endVts$, or failed and were mended to fragments that were eventually mended to the $HeadPath$. In the latter case when those fragments were mended either $x$ was announced or not, if it was then their fragment was smaller or equal according to $\succ$ order (Definition~\ref{def_path_ordering}) than $x$'s fragment (otherwise they would have eventually be mended to $x$'s fragments and not the $HeadPath$), meaning for any Node $y$ on their fragment $y.startVts \ngtr x.endVts$. 
    If $x$ was not announced when those fragments were mended, then also for any Node $y$ on their fragment $y.startVts \ngtr x.endVts$ because $y$ started before $x$ was announced and before $x.endVts$ was assigned and VTS can only be incremented.
%
%
\end{proof}

\begin{theorem}
    Algorithm~\ref{alg_ind_swap} implements a recoverable NRL SWAP in the independent failures model using only read, write and primitiveSwap operations and satisfies NRL. Its SWAP operations are wait-free.
\end{theorem}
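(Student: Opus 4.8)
The plan is to mirror the proof of Theorem~\ref{thoerm_glb_swap_is_NRL} and to assemble the three lemmas just proved, accounting for the fact that in the independent model RECOVER runs concurrently with the SWAP operations of other processes and with other RECOVER invocations. Wait-freedom of SWAP is immediate from the code: the SWAP procedure contains no loops, so it terminates within a bounded number of steps in any crash-free run. Recoverable well-formedness follows from the model, since every crash is followed by the matching RECOVER that completes the pending operation. Linearizability of crash-free executions is verbatim the system-wide argument: each SWAP linearizes at its primitive swap (Line~\ref{ind_swap_primitive_swap}), the next operation to execute that line returns the value it swapped in, and the first one returns $\bot$ from \emph{headNode}. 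Thus the entire burden is to show that $N(H)$ is linearizable for finite histories that contain crashes.

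I would fix such an $H$ and define the linearization of its completed operations by the \emph{prev}-pointer structure: every completed operation returns \emph{prev.val}, \emph{headNode} (value $\bot$) is first, and a node $u$ precedes a node $w$ exactly when $u$ is reachable from $w$ along \emph{prev} pointers. Operations that were never announced are handled separately: RECOVER re-executes SWAP for them (Line~\ref{ind_recover_rerun_swap}), so they linearize at a fresh primitive swap. Detectability then holds because, for announced operations, RECOVER returns precisely the committed \emph{prev.val}. To see that this order is a legal sequential swap history, I would show, as in Lemma~\ref{after_recovery_lemma}, that the \emph{prev} pointers of the completed operations assemble into a single chain ending at \emph{headNode}: Lemma~\ref{lemma_ind_no_two_pointer_to_1_node} gives that each node is the \emph{prev}-target of at most one node, so every swapped-in value is returned by at most one operation and the structure has in-degree at most one; acyclicity of ${\cal G}$ together with the fact that each announced node is eventually linked --- by its own Line~\ref{ind_swap_prev_assign}, or during recovery in Line~\ref{ind_recover_prev_assign} or Line~\ref{ind_recover_prev_assign_headpath} --- closes the chain down to \emph{headNode}.

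The decisive step is showing that this order respects real-time order. Suppose $op_1$ completes before $op_2$ is invoked, with nodes $\text{Node}_1$ and $\text{Node}_2$. Since $u$ precedes $w$ in the linearization exactly when $u$ is reachable from $w$, the statement ``$op_2$ precedes $op_1$'' is equivalent to $\text{Node}_2$ being reachable from $\text{Node}_1$ by \emph{prev} pointers, and it suffices to rule this out. Assume it holds. Applying Lemma~\ref{lemma_ind_ordering_is_correct} with $x=\text{Node}_1$ and $y=\text{Node}_2$ gives $\text{Node}_2.startVts \ngtr \text{Node}_1.endVts$. On the other hand, because $op_1$ collected \emph{endVts} (Line~\ref{ind_swap_collect_endVts}) before $op_2$ incremented its own \emph{VTS} entry (Line~\ref{SWAP:incVTSEntry}) and then collected \emph{startVts} (Line~\ref{SWAP:collectTostartVTS}), and since \emph{VTS} is monotone non-decreasing, $\text{Node}_2.startVts$ weakly dominates $\text{Node}_1.endVts$ in every coordinate and strictly dominates it in $op_2$'s coordinate, so $\text{Node}_2.startVts > \text{Node}_1.endVts$ --- a contradiction. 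Hence $op_1$ precedes $op_2$, and the linearization respects $<_H$.

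I expect the main obstacle to be the completeness argument of the second paragraph in its fully concurrent form: unlike the system-wide model, there is no single GRECOVER that merges all fragments at once, so the single-chain-to-\emph{headNode} invariant must be maintained inductively across arbitrarily interleaved SWAP completions and RECOVER completions. The delicate point is that a recovering process assigns \emph{prev} only to the start of a maximal path and, crucially, restricts its candidate to fragments whose start lies in $V_1$ (Line~\ref{ind_recover_find_candidate}); I would need the two-pass construction of $V = V_1 \cup V_2$, the \emph{inWork}-await in \emph{gatherGraph} (Line~\ref{gather_graph_await_inwork}), and the RME mutex together with its CSR property (Line~\ref{ind_recover_acquire_lock}) to jointly guarantee that no two \emph{prev} pointers ever target the same node and that the candidate chosen by a recovery is both consistent with $\succ$-ordering (Definition~\ref{def_path_ordering}) and not stolen by a concurrent primitive swap --- which is exactly what Lemma~\ref{lemma_ind_no_two_pointer_to_1_node} and Lemma~\ref{lemma_ind_all_before_are_in_V1} are designed to supply. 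Verifying that these guarantees compose correctly over an interleaving of many recoveries, rather than for a single recovery in isolation, is the part I would treat most carefully.
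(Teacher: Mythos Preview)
Your proposal is correct and follows essentially the same approach as the paper: wait-freedom and crash-free linearizability are handled identically, uniqueness of return values comes from Lemma~\ref{lemma_ind_no_two_pointer_to_1_node}, and the real-time order argument is exactly the paper's contradiction via Lemma~\ref{lemma_ind_ordering_is_correct} combined with the observation that $\text{Node}_2.startVts > \text{Node}_1.endVts$ whenever $op_1$ completes before $op_2$ begins. One small point to tighten: $op_1$ may complete via RECOVER rather than SWAP, in which case \emph{endVts} is written at Line~\ref{ind_recover_collect_endVTS} rather than Line~\ref{ind_swap_collect_endVts}; the paper covers both by noting simply that every operation writes \emph{endVts} before completing, and you should phrase it the same way.
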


\begin{proof}
    Proving correctness for crash-free executions, and wait-freedom of SWAP operation can be done exactly as in the proofs for the system-wide failure model. We now consider an execution $\alpha$ with independent process crashes.

    Let $node(Op)$ denote the node that represents $Op$. We show that the (possibly partial) order that exists between operations which is induced by \emph{prev} pointers, always satisfies the following requirements. 1) Each operation $Op$ can only return the value of the operation $Op'$ represented by $node(Op).prev$ and, 2) if it does, it is the only operation that returns the value of $Op'$ and it does not precede in real-time order any $Op''$ in the list induced by $prev$ pointers from $node(Op)$. 

    For any operation $Op_1$ that returns the value of operation $Op_0$, From Lemma \ref{lemma_ind_no_two_pointer_to_1_node}, no other SWAP operation can return $node(Op_0).val$.
    
    


    It is left to show that in any execution $\alpha$ if $op_1$ ended before $op_2$ started, then $node(op_2)$ cannot be in the list induced by $prev$ pointers starting from  $node(op_1)$ at any point during the execution. Proving this will establish that the operations can be linearized correctly according to the reversed order of the list induced by prev pointers, as in the system-wide failures model algorithm.
    
    As $op_1$ ended before $op_2$ started, and any operation writes to its node's $endVts$ before completing,  $node(op_2).startVts > node(op_1).endVts$ holds. It now follows from Lemma~\ref{lemma_ind_ordering_is_correct} that if  $node(Op_2)$ is in the list induced by $prev$ pointers starting from $node(Op_1)$, then $node(Op_2).startVts \ngtr node(Op_1).endVts$. This is a contradiction.
\end{proof}

\section{Impossibility of lock-freedom for the independent failures model}
 In this section, we prove a theorem establishing the impossibility of implementing lock-free algorithms for a wide variety of recoverable objects under the independent failures model. 
 This generalizes previous results~\cite{attiya2018nesting, nahum2022recoverable} to a wider family of operations and implementations. 
 In particular, it applies to any self-implementation of swap under the independent failures model, showing that our usage of a mutual exclusion lock in Algorithm~\ref{alg_ind_swap} is essential.

 We start by defining the notion of a \emph{distinguishable operation}. 
 An operation $M$ is distinguishable, if there exists a history $\alpha_{base}$ in $spec$ and two invocations $M_p$ and $M_q$ of $M$, such that the return values of the invocations allows the system to distinguish which operation is applied right after $\alpha_{base}$. Formally:
\begin{definition}[Distinguishable operation]
\label{def_distinguishable}
 Operation $M : \text{VAL} \rightarrow RET$ is \emph{distinguishable} if there exists a history $\alpha_{base}$ and values $x,y \in \text{VAL}$, $z\in RET$, such that if $M(x)$ and $M(y)$ are applied sequentially right after $\alpha_{base}$, the first (and only the first) invocation of $M$ to complete returns $z$.
\end{definition}

Assume a swap object with value 0 and two SWAP operations.
If SWAP(1) and SWAP(2) are applied sequentially, 
only the first operation applied will return 0. 
This shows that SWAP is a distinguishable operation. Similarly, it's easy to show that pop and deque operations of the stack and queue data structures, as well as \emph{fetch\&add} and \emph{test\&set}, are also distinguishable operations.

Our impossibility result applies to implementations of distinguishable operations 
that use only read, write, and a set of \emph{interfering functions},
defined as follows:

\begin{definition}[Interfering functions~\cite{herlihy1991wait}]
\label{def:interfering}
Let F be a set of primitive functions indexed by an arbitrary set $K$. Define F to be a set of \emph{interfering functions} if for all $i$ and $j$ in $K$, for any object $O$ that supports $f_i$ and $f_j$, and for any state $S$ of $O$, 
\begin{description}
\item[(1) $f_j$ and $f_i$ commute:]
 The application of $f_i$ to $O$ in state $S$ by process $p$ followed by the application of $f_j$ to $O$ by process $q$ leaves $O$ (but not necessarily the local state of each process) in the same state as the application of $f_j$ to $O$ in state $S$ by process $q$ followed by the application of $f_i$ to $O$ by process $p$; or
\item[(2) $f_j$ overwrites $f_i$:]
The application of $f_i$ to $O$ in state $S$ by process $p$ followed by the application of $f_j$ to $O$ by process $q$ leaves $O$ (but not necessarily the local state of each process) in the same state as the application of $f_j$ to $O$ in state $S$ by $q$ alone.
\end{description} 
\end{definition}

A \emph{configuration} $C$ consists of the states of all processes 
and the values of all shared base objects. 
Sometimes we use the notions of a configuration and a history interchangeably. 
For example, if a finite history $H$ leads to a configuration $C$ 
we may use $H$ for representing $C$ when $H$ is clear from the context.  
Two configurations $C_1$ and $C_2$ are \emph{indistinguishable} 
to a set of processes $P$, denoted $C_1 \stackrel{P}{\sim} C_2$, 
if every process in $P$ has the same state in $C_1$ and $C_2$, 
and every shared object holds the same value in $C_1$ and $C_2$.

Given a configuration $C$ reached after a history $\alpha_{base}$,
distinguishable operation $M$, and a process $r\in \{p,q\}$, 
we say that $C$ is \emph{$r$-valent} if there is an execution starting from $C$ 
in which the return value of $M$ or $M$.RECOVER by $r$ is $z$ (where $\alpha_{base}$, 
$M$ and $z$ are as in Definition~\ref{def_distinguishable}). 
$C$ is \emph{bivalent} if it is both $p$-valent and $q$-valent,
for $p \neq q$. 
$C$ is \emph{$p$-univalent} if it is $p$-valent and not $q$-valent, 
and symmetrically for $q$-univalent. 
$C$ is \emph{univalent} if it is either $p$-univalent or $q$-univalent. 
Let $C$ be a bivalent configuration and $s$ be a step.
If $C \circ s$ is univalent, we say that $s$ is a \emph{critical step}. We generalize the proofs of~\cite{attiya2018nesting, nahum2022recoverable} 
to prove the following theorem by using valency arguments~\cite{fischer1985impossibility, golab2020recoverable}.

\begin{theorem} 
\label{impossibility_theorm}
Let M be a distinguishable operation. 
There is no recoverable implementation $I$ of $M$ from read, 
write and a set of $K \geq 1$ of interfering primitive operations 
$f_1 \ldots f_K$ in the independent failures model, 
such that both M and M.RECOVER are lock-free. 
\end{theorem}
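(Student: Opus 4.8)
The plan is to argue by contradiction using a valency argument in the style of the FLP impossibility proof, adapted to the recoverable setting. Suppose such a lock-free implementation $I$ exists. Starting from the configuration $C_0$ reached after $\alpha_{base}$, I want to construct a bivalent initial configuration from which I can derive an infinite execution in which some process takes infinitely many steps without any operation completing, contradicting lock-freedom. The distinguishability of $M$ gives the initial bivalence: by Definition~\ref{def_distinguishable}, if $p$ runs $M(x)$ to completion alone it returns $z$ (so $C_0$ is $p$-valent), and symmetrically running $q$ alone first makes $C_0$ $q$-valent; hence $C_0$ is bivalent.

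\textbf{The core critical-step analysis.}

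First I would establish that from any bivalent configuration the adversary can always reach another bivalent configuration, i.e.\ that a critical step can be avoided forever, OR — the standard alternative — that at a critical step the two processes $p$ and $q$ are poised to act and derive a contradiction from the structure of the step. I expect the argument to hinge on examining the critical step $s$ from a bivalent $C$ where $C\circ s_p$ is (say) $p$-univalent and $C\circ s_q$ is $q$-univalent. The key case analysis is on what kind of primitive $s_p$ and $s_q$ access. If they touch distinct base objects, the steps commute and $C\circ s_p s_q \stackrel{\{p,q\}}{\sim} C\circ s_q s_p$, forcing the two univalent successors to agree on valency — a contradiction. If both access the same object, I would split according to the primitive: if either step is a \emph{read}, the non-reading process cannot tell the difference, so I can run it in isolation to expose the contradiction; if both are \emph{writes}, the second overwrites the first, again yielding indistinguishable configurations to the isolated survivor; and if both apply interfering functions $f_i,f_j$, Definition~\ref{def:interfering} guarantees that either they commute or one overwrites the other, so in both subcases I can produce two configurations that are indistinguishable to a solo continuation by the process that is about to return $z$, contradicting univalence.

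\textbf{Where recoverability and crashes enter.}

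The genuinely new ingredient relative to the classical proof is that $M.\text{RECOVER}$ must also be lock-free, and crashes are available to the adversary. The plan is to exploit crashes exactly at the overwrite/commute junctions: after the critical step, I crash the process whose step is about to be ``hidden'' and let it recover, so that upon recovery it re-enters via $M.\text{RECOVER}$; because the surviving configuration is indistinguishable to the other process, the recovering process cannot recover a response consistent with both valencies. This is the step I expect to be the main obstacle, because I must verify that the indistinguishability is preserved across the crash-and-recover boundary — the recovery procedure reads only non-volatile state, and a crash resets local variables to arbitrary values, so I must ensure the two executions I am comparing agree on \emph{all} non-volatile base objects and not merely on the object under contention. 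Handling the overwrite case of interfering functions is delicate precisely because overwriting changes object state but not necessarily the recovering process's eventual return value, so I would lean on the fact that $M$ is distinguishable to force the returned value $z$ to betray which operation ran first.

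\textbf{Concluding the contradiction.}

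Finally, having shown that either bivalence is preserved indefinitely or every candidate critical step leads to a contradiction, I would assemble the infinite non-terminating execution: the adversary repeatedly advances from one bivalent configuration to the next, interleaving crashes and recoveries of $p$ and $q$ so that neither $M$ nor $M.\text{RECOVER}$ ever returns. Since both $M$ and $M.\text{RECOVER}$ are assumed lock-free, and in this execution processes take infinitely many steps without crashing infinitely often on a single attempt yet no operation completes, we obtain the desired contradiction, proving that no such implementation $I$ exists.
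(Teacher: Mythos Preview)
Your plan is essentially the paper's approach: an FLP-style valency argument starting from the bivalent configuration after $\alpha_{base}$, driving to a critical configuration where both pending steps apply interfering primitives $f_i,f_j$ to the same base object, and then exploiting a crash to push past the critical step into a new bivalent configuration. The case analysis you sketch for ruling out reads, writes, and distinct objects matches the paper's Claim~\ref{criticalsteps} almost verbatim.

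Two places where your write-up drifts from what the argument actually needs. First, the crash is not used to ``derive a contradiction from the structure of the step''; rather, it is used to \emph{restore bivalence}. Concretely, the paper crashes $p$ immediately after both steps are applied, obtaining (in the commuting case) $C_2 = C_1\circ p\circ q\circ CRASH_p$ and $C_3 = C_1\circ q\circ p\circ CRASH_p$ with $C_2 \stackrel{p}{\sim} C_3$; a solo run of $M.\text{RECOVER}$ by $p$ then returns the same value $v$ from both, and depending on whether $v=z$ one of $C_2,C_3$ is shown bivalent (using also that $q$ cannot observe the crash). The overwrite case is analogous with $C_3 = C_1\circ q\circ CRASH_p$. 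So the output of the critical-step analysis is a new bivalent configuration, not a local impossibility.

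Second, your concluding construction should not ``interleave crashes and recoveries of $p$ and $q$''. Lock-freedom only promises progress when processes take sufficiently many steps \emph{without crashing}; to contradict it you must exhibit a single process that takes infinitely many crash-free steps yet never completes. In the paper's construction only $p$ ever crashes, and the contradiction is that $q$ takes infinitely many steps, never crashes, and never returns. Your phrase ``without crashing infinitely often on a single attempt'' is not strong enough---you need one process that does not crash at all along the infinite run.
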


\begin{proof} 
Assume towards a contradiction that such a lock-free implementation exists. 
Assume that process $p$ invokes $M$ with value $x$ and process $q$ 
invokes $M$ with value $y$, for $x,y$ and $z$ as in Definition~\ref{def_distinguishable}.

To prove the theorem, 
we construct an execution in which each process performs an infinite number of steps and $q$ neither crashes nor completes its operation.

Configuration $C_0$, reached after execution $\alpha_{base}$, is bivalent because a solo execution of either $p$ or $q$ from $C_0$ returns $z$. Following a standard valency argument and since we assume that $M$ is lock-free, there is a crash-free execution starting from $C_0$ that leads to a bivalent configuration $C_1$, in which both $p$ and $q$ are about to execute a critical step. It must be that one step leads to a $p$-univalent configuration while the other leads to a $q$-univalent configuration. 

\begin{restatable}{claim}{critical}
\label{criticalsteps}
The critical steps of $p$ and $q$ apply (possibly the same) primitives $f_i$ and $f_j$, 
respectively, to the same base object.
\end{restatable}

\begin{proof}
Consider all possible steps: read, write, crash and $f_1 \dots f_K$. Assume $s_p$ and $s_q$ are critical steps by process $p$ and $q$ respectively, such that $C_1 \circ s_p$ is $p$-univalent while $C_1 \circ s_q$ is $q$-univalent.
\begin{itemize}
    \item Steps $s_p$ and $s_q$ access distinct registers. In this case, these configurations are indistinguishable to $p$ and $q$, that is, $C \circ s_p \circ s_q \stackrel{p,q}{\sim} C \circ s_q \circ s_p$
    \item Step $s_q$ is a crash step then $C \circ s_p  \circ s_q \stackrel{p}{\sim} C \circ s_q  \circ s_p$
    \item Steps $s_p$ and $s_q$ read the same register. Also in this case $C \circ s_p \circ s_q \stackrel{p,q}{\sim} C \circ s_q \circ s_p$
    \item Step $s_p$ writes to some register $r$ step and $s_q$ reads $r$. In this case, $C \circ s_p \stackrel{p}{\sim} C \circ s_q \circ s_p$ holds.
    \item Step $s_p$ applies $f_i$ $1\leq i \leq K$ and step $s_q$ reads $r$. In this case, $C \circ s_p \stackrel{p}{\sim} C \circ s_q \circ s_p$ holds.
    \item Steps $s_p$ and $s_q$ write to the same register. In this case, $C \circ s_p \stackrel{p}{\sim} C \circ s_q \circ s_p$ holds.
    \item Step $s_p$ applies $f_i$, $1\leq i \leq K$, step $s_q$ writes to the same register. In this case, $C \circ s_q \stackrel{q}{\sim} C \circ s_p \circ s_q$ holds.
    \item Step $s_p$ applies $f_i$, $1\leq i \leq K$, step $s_q$ applies $f_j$, $1\leq j \leq K$ each to a different base object $O$, In this case $C \circ s_p \circ s_q \stackrel{p,q}{\sim} C \circ s_q \circ s_p$ holds.
    
\end{itemize}
In each of the above cases, the configurations are indistinguishable to at least one process, and therefore, must have the same valencies. Therefore, it must be that $p$ and $q$ apply $f_i$ and $f_j$ respectively to the same base object.
\end{proof}

Assume, without loss of generality, that $C_1\circ p$ is $p$-univalent while $C_1\circ q$ is $q$-univalent.
We consider two cases: 

\textbf{Case 1: $f_i$ and $f_j$ commute:}
Consider executions $C_2 , C_3$ where $C_2 = C_1\circ p \circ q \circ CRASH_p$ and $C_3 = C_1\circ q \circ p \circ CRASH_p$. Configurations $C_2$ and $C_3$ are reached after $p$ and $q$ each take a step (in different orders) in which they apply their operations to the same base object $O$ and then $p$ crashes.

A solo execution of M.RECOVER by $p$ from both $C_2$ and $C_3$ must complete since $I$ is lock-free. Furthermore, $C_2 \stackrel{p}{\sim} C_3$ holds, because $p$'s response from the primitive $f_i$ is lost, while the value of $O$ is the same in both configurations since $f_i$ and $f_j$ commute. Consequently, an execution of M.RECOVER by $p$ from both $C_2$ and $C_3$ must return the same value. Let $v$ denote this value.

Assume first that $v=z$ and thus $C_3$ is $p$-valent. Configuration $C_1\circ q \circ p$ is $q$-univalent, while $C_3 = C_1\circ q \circ p \circ CRASH_p$ is $p$-valent. However,  $C_1\circ q \circ p \stackrel{q}{\sim} C_3$ holds because $q$ is unaware of $p$’s crash. Consequently, a solo execution of $q$ from $C_3$ must return $z$, that is, $C_3$ is also $q$-valent. This proves that $C_3$ is bivalent.

Assume then that $v\ne z$. We now show that, in this case, $C_2$ is bivalent. Indeed, from this assumption, $C_2$ is $q$-valent, because a solo execution of $q$ after $p$ completes (and returns $v \ne z$) must return $z$ since, from Definition~\ref{def_distinguishable}, exactly one of these two operations must return $z$. However, configuration $C_1 \circ p \circ q$ is $p$-univalent, while $C_2 = C_1 \circ p \circ q \circ CRASH_p \stackrel{q}{\sim} C_1 \circ p \circ q$, therefore a solo execution of $q$ from $C_2$ must return $x$ s.t. $x\ne z$. Thus, $C_2$ is bivalent.


\textbf{Case 2: $f_j$ overwrites $f_i$:}
Consider executions $C_2 , C_3$ where $C_2 = C_1\circ p \circ q \circ CRASH_p$ and $C_3 = C_1 \circ q \circ CRASH_p$.
A solo execution of M.RECOVER by $p$ from both $C_2$ and $C_3$ must complete since \emph{I} is lock-free. Furthermore, $C_2 \stackrel{p}{\sim} C_3$ because $p$'s response from the primitive $f_i$ is lost, while the value of the base object $f_i$ and $f_j$ are applied to is the same in both configurations since $f_j$ overwrites $f_i$. Therefore, an execution of M.RECOVER by $p$ from both $C_2$ and $C_3$ returns the same value. Let $v$ denote this value.

Assume $v = z$ and thus $C_3$ is $p$-valent. 
$C_1 \circ q$ is $q$-univalent, while $C_3 = C_1 \circ q \circ CRASH_p$ is $p$-valent. 
$C_1 \circ q \stackrel{q}{\sim} C_3$ holds because $q$ is unaware of $p$’s crash.
Therefore, a solo execution of $q$ from $C_3$ returns $z$, 
that is, $C_3$ is also $q$-valent. 
This proves that $C_3$ is bivalent.

Assume then that $v\ne z$. We show that in this case $C_2$ is bivalent. Indeed, from our assumption, $C_2$ is $q$-valent, as a solo execution of $q$ after $p$ completes must return $z$ since, from Definition~\ref{def_distinguishable}, exactly one of these two operations must return $z$. However configuration $C_1 \circ p \circ q$ is $p$-univalent and $C_2 = C_1 \circ p \circ q \circ CRASH_p \stackrel{q}{\sim} C_1 \circ p \circ q$, therefore a solo execution of $q$ from $C_2$ must return $x\ne z$. This establishes that $C_2$ is bivalent.

In both cases, this shows that we can keep extending the execution 
obtaining an infinite execution in which neither $p$ nor $q$ 
complete their operations and $q$ performs an infinite number of steps without crashing, 
contradicting the lock-freedom assumption.
\end{proof}



\section{Discussion}

We present two NRL self-implementations of the swap object, one for the system-wide failures model and 
the other for the independent failures model. In both, SWAP operations are wait-free and the recovery code is blocking. In the system-wide failures model, this is a result of delegating the recovery to a single process, while in the independent failures model, it is due to coordination between the recovering process and the other processes. 
We also prove the impossibility of a lock-free implementation of distinguishable operations using read-write and a set of interfering functions, in the independent failures model. 
In particular, this shows that with independent failures, a self-implementation of swap cannot be lock-free.

Our algorithms use $O(m*n)$ space, where $m$ is the number of SWAP invocations in the execution. Bounding memory consumption to $O(n)$ is relatively easy if a recoverable swap operation by one process can wait for operations by other processes to either make progress or fail. An interesting open question is to figure out whether the space complexity of detectable swap self-implementations with wait-free operations can be reduced to $o(m)$ or if $\Omega(m)$ is inherently required. We leave this question for future work.


Finally, it is also important to explore how self-implementations, 
in particular of swap, can be used to turn non-recoverable higher-level objects
into NRL implementations of the same objects.





{
	\bibliographystyle{plainurl}
	\bibliography{main}
}

\end{document}